\newcommand{\ie}{\textit{i.e.}\xspace}
\newcommand{\No}{\textnormal{Anc}}
\newcommand{\ia}{\mathrm{i}}
\newcommand{\ja}{\mathrm{j}}
\newcommand{\xa}{\mathrm{x}}
\newcommand{\sa}{\mathrm{s}}
\newcommand{\bb}{\mathrm{b}}
\newcommand{\ea}{\mathrm{e}}
\newcommand{\da}{\mathrm{d}}
\newcommand{\LN}{\mathcal{L}(\mathbb{N})}
\newcommand{\N}{\mathbb{N}}
\newcommand{\B}{\mathbb{B}}
\newcommand{\Z}{\mathbb{Z}}
\newcommand{\q}{\mathrm{q}}
\newcommand{\qa}{\mathrm{p}}
\renewcommand{\op}{\mathrm{op}}
\newcommand{\skp}{\tb{ skip};}
\newcommand{\tb}[1]{\text{\textnormal{\textbf{#1}}}}
\newcommand{\el}[1]{\textnormal{\small[$#1$]}}
\newcommand{\proc}{\mathrm{proc}}
\newcommand{\rec}{\mathrm{rec}}
\newcommand{\rot}{\mathrm{rot}}
\newcommand{\inv}{\mathrm{inv}}
\newcommand{\decl}{\tb{decl }}
\newcommand{\call}{\tb{call }}
\newcommand{\kpsi}{\ket{\psi}}
\newcommand{\kpsip}{\ket{\psi'}}
\newcommand{\kpsipp}{\ket{\psi''}}
\newcommand{\kpsik}{\ket{\psi_k}}
\newcommand{\qcase}[3]{\tb{qcase } #1 \tb{ of }\{0\to #2, 1\to #3\}}
\newcommand{\qcaseo}{\tb{qcase }}
\newcommand{\cif}[3]{\tb{if } #1 \tb{ then } #2 \tb{ else }#3}
\newcommand{\ST}{\mathrm{S}}
\newcommand{\CST}{\mathrm{Cst}}
\newcommand{\D}{\mathrm{D}}
\newcommand{\PR}{\mathrm{P}}
\newcommand{\QFT}{\mathrm{QFT}}
\newcommand{\U}{\mathrm{U}}
\newcommand{\asg}{{\ \texttt{\textasteriskcentered =}\ }}
\newcommand{\nil}{\textrm{nil}}
\newcommand{\tswap}{\textrm{SWAP}}
\newcommand{\cg}[3]{#1(#2,#3)}
\newcommand{\tms}{s}
\newcommand{\stdv}{\kpsi,A,l}
\newcommand{\RCalls}{\textnormal{width}}
\newcommand{\fbqp}{\textsc{fbqp}\xspace}
\newcommand{\bqp}{\textsc{bqp}\xspace}
\newcommand{\qfp}{\ensuremath{\widehat{\square^{\mathrm{QP}}_1}}}
\newcommand{\foq}{\textsc{foq}\xspace}
\newcommand{\wf}{\textsc{wf}\xspace}
\newcommand{\pfoq}{\textsc{pfoq}\xspace}
\newcommand{\size}[1]{| #1 |}
\newcommand{\sem}[1]{\llbracket #1 \rrbracket}
\newcommand{\topbot}{\diamond}
\newcommand{\semto}[1]{\stackrel{#1}{\longrightarrow}}
\newcommand{\compile}{\textbf{compile}\xspace}
\newcommand{\comprec}{\textbf{compr}\xspace}
\newcommand{\optimize}{\textbf{optimize}\xspace}
\newcommand{\cs}{cs}
\newcommand{\level}{\textnormal{level}}
\newcommand{\lv}{\textnormal{lv}}
\def\orcidID#1{\smash{\href{http://orcid.org/#1}{\protect\raisebox{-1.25pt}{\protect\includegraphics{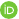}}}}}
\newcounter{program}
\newenvironment{program*}{\begin{equation*}\aligned}{\endaligned\end{equation*}}
\renewcommand\bra[1]{{\langle{#1}|}}
\renewcommand\ket[1]{%
  \@ifnextchar\braket{\k@t{#1}\!\!}{\k@t{#1}\!}%
}
\newcommand\k@t[1]{{|{#1}\rangle}}
\begin{document}
\title{A programming language characterizing quantum polynomial time}
\author{Emmanuel Hainry\orcidID{0000-0002-9750-0460}
\and Romain Péchoux\orcidID{0000-0003-0601-5425}
\and Mário Silva\orcidID{0000-0002-9886-8400}}
\institute{Université de Lorraine, CNRS, Inria,
LORIA, F-54000 Nancy,
France
\email{\{hainry,pechoux,mmachado\}@loria.fr}}
\authorrunning{E. Hainry, R. Péchoux, M. Silva}

\maketitle

\begin{abstract}
We introduce a first-order quantum programming language, named \foq, whose terminating programs are reversible.
We restrict \foq to a strict and tractable subset, named \pfoq, of terminating programs with bounded width, that provides a first programming language-based characterization of the quantum complexity class \fbqp.
We finally present a tractable semantics-preserving algorithm compiling a \pfoq program to a quantum circuit of size polynomial in the number of input qubits.
\end{abstract}

\section{Introduction}
\paragraph{Motivations.} Quantum computing is an emerging and promising computational model that has been in the scientific limelight for several decades.
This phenomenon is mainly due to the advantage of quantum computers over their classical competitors, based on the use of purely quantum properties such as superposition and entanglement. The most notable example being Shor's algorithm for finding the prime factors of an integer~\cite{Shor}, which is exponentially faster than the most efficient known classical factoring algorithm and which is expected to have implications in cryptography (RSA encryption, etc.).

Whether due to the fragility of quantum systems, namely the engineering problem of maintaining a large number of qubits in a coherent state, or by lack of reliable technological alternatives, quantum computing is typically described at a level close to hardware. Without any hope of being exhaustive, one can think to quantum circuits~\cite{Feynman1982,NielsenChuang}, to measurement-based quantum computers~\cite{B09,DK06} or to circuit description languages~\cite{R15}. This low-level machinery restricts drastically the abstraction and programming ease offered by these models and quantum programs currently suffer from the comparison with their classical competitors, which have many high-level tools and formalisms based on more than 50 years of scientific research, engineering development, and practical and industrial applications. 

In order to solve these issues, a major effort is made to realize the promise of a quantum computer, which requires the development of different layers of hardware and software, together referred to as the \emph{quantum stack}. Our paper is part of this line of research. We focus on the highest layers of the quantum stack: quantum programming languages and quantum algorithms. We seek to better understand what can be done efficiently on a quantum computer and we are particularly interested in the development of quantum programming languages where program complexity can be certified automatically by some static analysis technique.

\paragraph{Contribution.}
Towards this end, we take the notion of polynomial time computation as our main object of study. Our contributions are the following.
\begin{itemize}
\item We introduce a quantum programming language, named \foq, that includes first-order recursive procedures. The input of a \foq program consist in a sorted set of qubits, a list of pairwise distinct qubit indexes. A \foq program can apply to each of its qubits basic operators corresponding to unary unitary operators. The considered set of operators has been chosen in accordance with~\cite{Yamakami20} to form a universal set of gates.
\item After showing that terminating \foq programs are reversible (Theorem~\ref{thm:rev}), we restrict programs to a strict subset, named \pfoq, for \emph{polynomial time} \foq. The restrictions put on a \pfoq programs are tractable (\ie, can be decided in polynomial time, see Theorem~\ref{thm:trac}), ensure that programs terminate on any input (Lemma~\ref{lem:termination}), and prevent programs from having any exponential blow up (Lemma~\ref{lem:poly}). 
\item We show that the class of functions computed by \pfoq programs is \emph{sound} and \emph{complete} for the quantum complexity class \fbqp. \fbqp is the functional extension of \emph{bounded-error quantum polynomial time}, known as \bqp~\cite{BernsteinVazirani}, the class of decision problems solvable by a quantum computer in polynomial time with an error probability of at most $\frac{1}{3}$ for all instances. Hence the language $\pfoq$ is, to our knowledge, the first programming language characterizing quantum polynomial time functions. Soundness (Theorem~\ref{thm:soundness}) is proved by showing that any \pfoq program can be simulated by a quantum Turing machine running in polynomial time~\cite{BernsteinVazirani}. The completeness of our characterization (Theorem~\ref{thm:completeness}) is demonstrated by showing that \pfoq programs strictly encompass Yamakami's function algebra, known to be \fbqp-complete~\cite{Yamakami20}.
\item We also describe a polynomial-time deterministic algorithm $\compile$ (based on the subroutines described in Algorithms~\ref{alg:comprec} and~\ref{alg:optimize}), that takes in a $\pfoq$ program $\PR$ and an integer $n$ and outputs a quantum circuit of size polynomial in $n$ that simulates $\PR$ on an input size of $n$ qubits. The existence of such circuits is not surprising, as a direct consequence of Yao's characterization of the class $\bqp$ in terms of uniform families of circuits of polynomial size~\cite{Yao1993}. However, a constructive  generation based on Yao's algorithm is not satisfactory because of the use of quantum Turing machines which makes the circuits complex and not optimal (in size). We show that, in our setting, circuits can be effectively computed and that the $\compile$ algorithm is tractable (Theorem~\ref{thm:compilepoly}).
\end{itemize}

Our programming language $\foq$ and the restriction to $\pfoq$ are illustrated throughout the paper, using the Quantum Fourier Transform $\QFT$ as a leading algorithm (Example~\ref{ex:qft}).

\paragraph{Related work.} 
This paper belongs to a long standing line of works trying to specify, understand, and analyze the semantics of quantum programming languages, starting with the cornerstone work of Selinger~\cite{selinger_2004}. The motivations in restricting the considered programs to \pfoq were inspired by the works on \emph{implicit computational complexity}, that seek to characterize complexity classes by putting restrictions (type systems or others) on standard programming languages and paradigms~\cite{Bellantoni1992,DL11,P20}. These restrictions have to be implicit (\ie, not provided by the programmer) and tractable. Among all these works, we are aware of two results~\cite{Yamakami20} and~\cite{DALLAGO2010377} studying polynomial time computations on quantum programming languages, works from which our paper was greatly inspired. \cite{DALLAGO2010377} provides a characterization of \bqp based on a quantum lambda-calculus. Our work is an extension to \fbqp with a restriction to first-order procedures.
Last but not least, \cite{DALLAGO2010377} is based on Yao's simulation of quantum Turing machines~\cite{Yao1993} while we provide an explicit algorithm for generating circuits of polynomial size. Our work is also inspired by the function algebra of~\cite{Yamakami20}, that characterizes \fbqp: our completeness proof shows that any function in~\cite{Yamakami20} can be simulated by a \pfoq program (Theorem~\ref{thm:completeness}). However, we claim that  \foq  is a more general language for \fbqp in so far that it is much less constraining (in terms of expressive power) than the function algebra of~\cite{Yamakami20}: any function of~\cite{Yamakami20} can be, by design, transformed into a \pfoq program, whereas the converse is not true. We can take as example the quantum Fourier transform (QFT) which, as noted in~\cite{Yamakami20}, cannot be exactly computed by the function algebra without an additional initial quantum function. Furthermore, the \emph{multi-qubit recursion} construction described in~\cite{Yamakami20} is more restrictive than what we allow in \pfoq{}, since we may only call the same recursive function in each branch. 

\section{First-order quantum programming language}

\paragraph{Syntax and well-formedness.}
We consider a quantum programming language, called \foq for First-Order Quantum programming language, that includes basic data types such as \text{Integers}, \text{Booleans}, \text{Qubits}, \text{Operators}, and \text{Sorted Sets} of qubits, lists of finite length where all elements are different. A \foq program has the ability to call first-order (recursive) procedures taking a sorted set of qubits as a parameter. Its syntax is provided in Figure~\ref{fig:syntax}.

Let $\xa$ denote an integer variable and $\bar{\qa}$, $\bar{\q}$ denote sorted sets variables. The size of the sorted set stored in $\bar{\q}$ will be denoted by $\size{\bar{\q}}$. 
We can refer to the $i$-th qubit in $\bar{\q}$ as $\bar{\q}\el{i}$, with $1 \leq i \leq \size{\bar{\q}}$.
Hence, each non-empty sorted set variable $\bar{\q}$ can be viewed as a list $[\bar{\q}\el{1},\dots, \bar{\q}\el{\size{\bar{\q}}}]$. The empty sorted set, of size $0$, will be denoted by $\nil$ and $\bar{\q}\ominus \el{i}$ will denote the sorted set obtained by removing the qubit of index $i$ in $\bar{\q}$. For notational convenience, we extend this notation by $\bar{\q} \ominus \el{i_1,\dots,i_k}$, for the list obtained by removing the qubits of indexes $i_1,\dots,i_k$ in the sorted set $\bar{\q}$.

The language also includes some constructs $\U^f$ to represent (unary) unitary operators, for some total function $f \in \Z \to [0,2\pi)\cap \Tilde{\mathbb{R}}$. The function $f$ is required to be polynomial-time approximable: its output is restricted to $\Tilde{\mathbb{R}}$, the set of real numbers that can be approximated by a Turing machine for any precision $2^{-k}$ in time polynomial in $k$.

\begin{figure}
\hrulefill
$$ \begin{array}{lllll}
\text{(Integers)} & \ia \ &\triangleq \ &\  n \ |\ \xa\ |\ \ia+n \ |\ \ia-n\ |\ \size{\sa}, \quad \text{with }n\in \mathbb{N}\\
\text{(Booleans)} & \bb \ &\triangleq \ &\  \ia>\ia\ |\  \ia \geq \ia \ | \ \ia=\ia \ | \ \bb \wedge \bb \ | \ \bb \vee \bb \ | \ \neg \bb \ \\
\text{(Sorted Sets)} & \sa \ &\triangleq \  & \ \nil \ |\  \bar{\q}\ |\   \sa\ominus \el{\ia} \\
\text{(Qubits)} & \q \ &\triangleq \  &\ \sa\el{\ia} \\
\text{(Operators)} & \U^f(\ia) \ &\triangleq \ &\ \textrm{NOT}\ |\ \textrm{R}_Y^f(\ia)\ |\ \textrm{Ph}^f(\ia) , \quad \text{with }f \in \Z \to [0,2\pi)\cap \Tilde{\mathbb{R}} \\ 
\text{(Statements)} & \ST\ &\triangleq \ & \skp\ |\  \q \asg \U^f(\ia); \ | \ \ST\ \ST \ |\ \tb{if }\bb\tb{ then }\ST\tb{ else } \ST  \\
& & & \ |\ \qcase{\q}{\ST}{\ST} \ |\ \call \proc\el{\ia}(\sa); &\\
\text{(Procedure declarations)} & \D\ &\triangleq \ & \ \varepsilon \ |\  \decl \proc\el{\xa}(\bar{\qa})\{\ST\},\,\D \\
\text{(Programs)} & \PR(\bar{\q})\ &\triangleq \ &\,\D::\ST 
\end{array}$$
\caption{Syntax of \foq programs}
\label{fig:syntax}
\hrulefill
\end{figure}

A  \foq \emph{program} $\PR(\bar{\q})$ consists of a sequence of \emph{procedure declarations} $\D$ followed by a \textit{program statement} $S$, $\varepsilon$ denoting the empty sequence. In what follows, we will sometimes refer to program $\PR(\bar{\q})$ simply as $\PR$. Let $var(\ST)$ be the set of variables appearing in the statement $\ST$. Let $\size{\PR}$ be the size of program $\PR$, that is the total number of symbols in $\PR$.

A procedure declaration $\decl \proc\el{\xa}(\bar{\qa})\{\ST\}$ takes a sorted set parameter $\bar{\qa}$ and some optional integer parameter $\xa$ as inputs. $\ST$ is called the \emph{procedure statement}, $\proc$ is the \emph{procedure name} and belongs to a countable set \text{Procedures}. We will write $\ST^{\proc}$ to refer to $\ST$ and $\proc \in \PR$ holds if $\proc$ is declared in $\D$.

Statements include a no-op instruction, applications of a unitary operator to a qubit ($\q \asg \U^f(\ia);$), sequences,  (classical) conditionals, \emph{quantum cases}, and \emph{procedure calls} ($\call \proc\el{\ia}(\sa);$). A quantum case $\qcase{\q}{\ST_0}{\ST_1}$ provides a quantum control feature that will execute statements $\ST_0$ and $\ST_1$ in superposition.
For example, the $CNOT$ gate on qubits $\bar{\q}\el{i}$ and $\bar{\q}\el{j}$, for $i,j \in \N$, $i \neq j$, can be simulated by the following statement:
\begin{equation*}\textrm{CNOT}(\bar{\q}\el{i},\bar{\q}\el{j}) \triangleq \qcase{\bar{\q}\el{i}}{\skp}{\bar{\q}\el{j}\asg \textrm{NOT};}.
\end{equation*}

Throughout the paper, we restrict our study to \emph{well-formed} programs, that is, programs $\PR=\D::\ST$ satisfying the following properties:
$var(\ST) \subseteq \{\bar{\q}\}$;
$\forall  \proc\in\PR$, $var(\ST^{\proc})\subseteq \{\xa,\bar{\qa}\}$;
procedure names declared in $\D$ are pairwise distinct;
for each procedure call, the procedure name is declared in $\D$.

\paragraph{Semantics.}
Let $\mathcal{H}_{2^n}$ be the \emph{Hilbert space} $\mathbb{C}^{2^n}$ of $n$ qubits. 
We use Dirac notation to denote a quantum state $\ket{\psi} \in \mathcal{H}_{2^n}$. Each $\ket{\psi} \in \mathcal{H}_{2^n}$ can be written as a superposition of bitstrings of size $n$: $\ket{\psi} = \sum_{w\in\{0,1\}^n}\alpha_w \ket{w}$, with $\alpha_w \in \mathbb{C}$ and $\sum_w |\alpha_w|^2=1$.
The \emph{length} $\ell(\ket{\psi})$ of the state $\ket{\psi}$ is $n$.
Given two matrices $M,N$, we denote by $M^\dagger$ the transpose conjugate of $M$ and by $M \otimes N$ the tensor product of $M$ by $N$.
$\bra{\psi}$ is equal to $\ket{\psi}^\dagger$ and $\ketbra{\psi}{\phi}$ and $\braket{\psi}{\phi}$ are respectively the inner product and outer product of  $\ket{\psi}$ and $\ket{\phi}$. Let $I_{n}$ be the identity matrix in $\mathbb{C}^{n \times n}$.
Given $m \leq n$ and $i \in \{0,1\}$, define $\ket{i}_m \triangleq I_{2^{m-1}} \otimes \ket{i} \otimes  I_{2^{n-m}}$ and $\bra{i}_m \triangleq (\ket{i}_m)^\dagger$. 

A function $\sem{\U^f}\in\mathbb{\Z}\to\Tilde{\mathbb{C}}^{2 \times 2}$ is associated to each $\U^f$ as follows:

\noindent$
\sem{\textrm{NOT}}(n)\triangleq \begin{pmatrix}
0 & 1\\
1 & 0
\end{pmatrix}\!, \ 
\sem{\textrm{R}_Y^f}(n)\triangleq\begin{pmatrix}
\cos(f(n)) & -\sin(f(n))\\
\sin(f(n)) & \cos(f(n))
\end{pmatrix}\!, \
\sem{\textrm{Ph}^f}(n)\triangleq
\begin{pmatrix}
1 & 0\\
0 & e^{i f(n)}
\end{pmatrix}\!,
$

where $\Tilde{\mathbb{C}}$ is the set of complex numbers whose both real and imaginary parts are in $\Tilde{\mathbb{R}}$.
One can check easily that each matrix $M\triangleq\sem{\U^f}(n) \in \Tilde{\mathbb{C}}^{2 \times 2}$ is unitary, {\ie}, it satisfies $M^\dagger\,M = M\,M^\dagger=I_2$.

Let $\B$ to be the set of Boolean values $b \in \{\tb{false},\tb{true}\}$.
For a given set $X$, let $\mathcal{L}(X)$ be the set of lists of elements in $X$. Let $l=[x_1,\ldots,x_m]$, with $x_1,\ldots,x_m \in X$, denote a list of $m$-elements in $\mathcal{L}(X)$ and $[\,]$ be the empty list (when $m=0$). For $l,l' \in \mathcal{L}(X)$, $l \MVAt l'$ denotes the concatenation of $l$ and $l'$. $hd(l)$ and $tl(l)$ represent the tail and the head of $l$, respectively. 
Lists of integers will be used to represent \text{Sorted Sets}. They contain pointers to qubits (\ie, indexes) in the global memory.

We interpret each basic data type $\tau$ as follows:
$\sem{\mathrm{Integers}} \triangleq \Z$, $\sem{\mathrm{Booleans}}\triangleq\B$, $\sem{\mathrm{Sorted Sets}} \triangleq \mathcal{L}(\mathbb{N})$, $\sem{\mathrm{Qubits}} \triangleq \mathbb{N}$, and $\sem{\mathrm{Operators}} \triangleq \Tilde{\mathbb{C}}^{2\times 2}$.
Each basic operation $\op \in \{+, -, >, \geq, =, \wedge, \vee, \neg \}$ of arity $n$, with $1 \leq n \leq 2$, has a type signature $\tau_1 \times \ldots \times \tau_n \to \tau$ fixed by the program syntax.
For example, the operation $+$ has signature $\mathrm{Integers}\times\mathrm{Integers} \to \mathrm{Integers}$.
A total function $\sem{\op}\in \sem{\tau_1}\times \ldots \times \sem{\tau_n} \to \sem{\tau}$ is associated to each $\op$.

For each basic type $\tau$, the reduction $\Downarrow_{\sem{\tau}}$ is a map in $\tau \times \mathcal{L}(\mathbb{N}) \to \sem{\tau}$.
Intuitively, it maps an expression of type $\tau$ to its value in $\sem{\tau}$ for a given list $l$ of pointers in memory. 
These reductions are defined in Figure~\ref{table:semnatbool}, where $\ea$ and $\da$ denote either an integer expression $\ia$ or a boolean expression $\bb$.

\begin{figure}[!h]
\hrulefill

\[
\begin{prooftree}
\hypo{(\ea, l) \Downarrow_{\sem{\tau_1}} m}
\hypo{(\da, l) \Downarrow_{\sem{\tau_2}} n}
\infer2[(Op)]{(\ea\ \op\ \da, l) \Downarrow_{\sem{\op}(\sem{\tau_1},\sem{\tau_2})} \sem{\op}(m,n)}
\end{prooftree}
\quad
  \begin{prooftree}
    \hypo{(\ia,l) \Downarrow_{\Z} n}
    \infer1[(Unit)]{(\U^f(\ia), l) \Downarrow_{\mathbb{C}^{2\times2}} \sem{\U^f}(n)}
  \end{prooftree}
\]
\\
\[
\begin{prooftree}
\hypo{}
\infer1[(Cst)]{(n,l)\Downarrow_{\Z} n}
\end{prooftree}
\quad
\begin{prooftree}
\hypo{(\sa, l)\Downarrow_{\mathcal{L}(\mathbb{N})} [x_1, \ldots, x_m]}
\hypo{(\ia, l) \Downarrow_{\Z} k \in [1,m]}
\infer2[(Rm$_\in$)]{(\sa\ominus\el{\ia}, l) \Downarrow_{\mathcal{L}(\mathbb{N})} [x_1, \ldots, x_{k-1}, x_{k+1}, \ldots, x_m]}
\end{prooftree}
\]
\\
\[
\begin{prooftree}
\hypo{(\sa,l)\Downarrow_{\mathcal{L}(\mathbb{N})} [x_1,\ldots,x_n]}
\infer1[(Size)]{(\size{\sa},l)\Downarrow_{\Z} n}
\end{prooftree}
\quad
\begin{prooftree}
\hypo{(\sa, l)\Downarrow_{\mathcal{L}(\mathbb{N})} [x_1, \ldots, x_m]}
\hypo{(\ia, l) \Downarrow_{\Z} k \notin [1,m]}
\infer2[(Rm$_{\notin}$)]{(\sa\ominus\el{\ia}, l) \Downarrow_{\mathcal{L}(\mathbb{N})} [\,]}
\end{prooftree}
\]
\\
\[
\begin{prooftree}
\infer0[(Nil)]{(\nil, l)\Downarrow_{\mathcal{L}(\mathbb{N})} [\,]}
\end{prooftree}
\quad
\begin{prooftree}
\hypo{(\sa, l)\Downarrow_{\mathcal{L}(\mathbb{N})} [x_1, \ldots, x_m]}
\hypo{(\ia, l) \Downarrow_{\Z} k \in [1,m]}
\infer2[(Qu$_\in$)]{(\sa\el{\ia}, l) \Downarrow_{\mathbb{N}} x_k}
\end{prooftree}
\]
\\
\[
\begin{prooftree}
\infer0[(Var)]{(\bar{\q}, l) \Downarrow_{\mathcal{L}(\mathbb{N})} l}
\end{prooftree}
\quad
\begin{prooftree}
\hypo{(\sa, l)\Downarrow_{\mathcal{L}(\mathbb{N})} [x_1, \ldots, x_m]}
\hypo{(\ia, l) \Downarrow_{\Z} k \notin [1,m]}
\infer2[(Qu$_{\notin}$)]{(\sa\el{\ia}, l) \Downarrow_{\mathbb{N}} 0}
\end{prooftree}
\]
\caption{Semantics of expressions}
\label{table:semnatbool}
\hrulefill
\end{figure}

Note that in rule (Rm$_{\notin}$), if we try to delete an undefined index then we return the empty list, and in rule (Qu$_{\notin}$), if we try to access an undefined qubit index then we return the value $0$ (defined indexes will always be positive).
The standard gates ${R}_Y(\pi/4)$, ${P}(\pi/4)$, and ${CNOT}$, form a universal set of gates \cite{Bokyn1999}, which justifies the choice of $\textrm{NOT}$, $\textrm{R}_Y^f(\ia)$, and $\textrm{Ph}^f(\ia)$ as basic operators. For instance, we can simulate the application of an Hadamard gate $H$ on $\q$ by the following statement $\q \asg \textrm{R}_Y^f(0);\ \q \asg \textrm{NOT};$, with the function $f$ defined by $\forall n, f(n)=\pi/4 \in [0,2\pi)\cap\Tilde{\mathbb{R}}$. By abuse of notation, we will sometimes use $\q \asg \textrm{H};$ to denote this statement. Using $\text{CNOT}$, we can also define the $\tswap$ operation swapping the state between two qubits $\bar{\q}\el{i}$ and $\bar{\q}\el{j}$, with $i,j \in \N,\ i \neq j$:
\begin{equation*}
\tswap(\bar{\q}\el{i},\bar{\q}\el{j})\triangleq \textrm{CNOT}(\bar{\q}\el{i},\bar{\q}\el{j})\ \textrm{CNOT}(\bar{\q}\el{j},\bar{\q}\el{i})\ \textrm{CNOT}(\bar{\q}\el{i},\bar{\q}\el{j}).\label{eq:swap}\end{equation*}

Let $\top$ and $\bot$ be two special symbols for termination and error, respectively, and let $\topbot$ stand for a symbol  in $\{\top,\bot\}$.
The set of \emph{configurations} of dimension $2^n$, denoted $\text{Conf}_n$, is defined by
\begin{equation*}
\text{Conf}_n \triangleq (\text{Statements}\cup \{\top,\bot\}) \times \mathcal{H}_{2^n} \times \mathcal{P}(\mathbb{N}) \times \mathcal{L}(\mathbb{N}),
\end{equation*}
with $\mathcal{P}(\mathbb{N})$ being the powerset over $\mathbb{N}$. A configuration $c=(\ST,\kpsi,A,l) \in \text{Conf}_n$ contains a statement $\ST$ to be executed (provided that $\ST \notin \{\top, \bot\}$), a quantum state $\kpsi$ of length $n$, a set $A$ containing the indexes of qubits that are allowed to be accessed by statement $\ST$, and a list $l$ of qubit pointers.

The program big-step semantics $\semto{}$, described in Figure~\ref{table:operationalsemantics}, is defined as a relation in $\bigcup_{n \in \mathbb{N}} \text{Conf}_n \times \text{Conf}_n$. 
In the rules of Figure~\ref{table:operationalsemantics}, $\semto{}$ is annotated by an integer, called \emph{level}.
For example, the level of the conclusion in the (Call$_{[\,]}$) rule is $1$.
The level is used to count the total number of procedure calls that are not in superposition (\ie, in distinct branches of a quantum case). 

\begin{figure}[!ht]
\hrulefill

\[
 \begin{prooftree}
 \infer0[(Skip)]{(\tb{skip},\stdv)\semto{0} (\top,\stdv)}
 \end{prooftree}
\]
\vspace{2mm}
\[
 \begin{prooftree}
 \hypo{(\sa\el{\ia},l)\Downarrow_{\mathbb{N}} n \notin A}
 \infer1[(Asg$_\bot$)]{(\sa\el{\ia}\asg \U^f(\ja);,\stdv)\semto{0} (\bot,\stdv)}
 \end{prooftree}
\]
\vspace{2mm}
\[
 \begin{prooftree}
 \hypo{(\sa\el{\ia},l)\Downarrow_{\mathbb{N}} n \in A}
 \hypo{(\U^f(\ja), l)\Downarrow_{\mathbb{C}^{2\times 2}} M}
 \infer2[(Asg$_\top$)]{(\sa\el{\ia}\asg \U^f(\ja);,\stdv)\semto{0} (  \top, I_{2^{n-1}}\otimes M \otimes I_{2^{l({\kpsi})-n}} \kpsi ,A, l )}
 \end{prooftree}
\]
\vspace{2mm}
\[
  \begin{prooftree}
  \hypo{(\ST_1,\stdv)\semto{m_1} (\top,\kpsip,A,l)}
   \hypo{(\ST_2,\kpsip,A,l)\semto{m_2} (\topbot,\kpsipp,A,l)}
 \infer2[(Seq$_\topbot$)]{(\ST_1\ \ST_2,\stdv)\semto{m_1+m_2} ( \topbot,\kpsipp,A,l)}
 \end{prooftree}
\]
\vspace{2mm}
\[
  \begin{prooftree}
  \hypo{( \ST_1,\stdv)\semto{m} (\bot,\stdv)}
  \infer1[(Seq$_\bot$)]{( \ST_1\ \ST_2,\stdv)\semto{m}(\bot,\stdv)}
 \end{prooftree}
\]
\vspace{2mm}
\[
  \begin{prooftree}
  \hypo{(\bb,l) \Downarrow_{\B} b \in \B}
  \hypo{(\ST_b,\stdv)\semto{m_b} (\topbot,\kpsip,A,l)}
  \infer2[(If)]{(\cif{\bb}{\ST_{\tb{true}}}{\ST_{\tb{false}}} ,\stdv)\semto{m_b}(\topbot ,\kpsip,A,l)}
  \end{prooftree}
\]
\vspace{2mm}
\[
  \begin{prooftree}
  \hypo{(\sa\el{\ia},l) \Downarrow_{\mathbb{N}} n \in A}
  \hypo{(\ST_{k},\kpsi,A\backslash \{{n}\},l)\semto{m_k} (\top,\kpsik,A\backslash \{{n}\},l)}
  \infer2[(Case$_\top$)]{ ( \qcase{\sa\el{\ia}}{\ST_0}{\ST_1},\stdv)\semto{\max_k m_k}(\top,\sum_{k} \ket{k}_{n}\! \bra{k}_{n} \kpsik,A,l)}
  \end{prooftree}
\]
\vspace{2mm}
\[
  \begin{prooftree}
  \hypo{(\sa\el{\ia},l) \Downarrow_{\mathbb{N}} n \in A}
  \hypo{(\ST_{k},\kpsi,A\backslash \{{n}\},l)\semto{m_k} (\topbot_k,\kpsik,A\backslash \{{n}\},l)}
\hypo{\bot\in\{\topbot_0, \topbot_1\}}
  \infer3[(Case$_\bot$)]{( \qcase{\sa\el{\ia}}{\ST_0}{\ST_1},\stdv)\semto{\max_k m_k}(\bot,\stdv)}
  \end{prooftree}
\]
\vspace{2mm}
\[
  \begin{prooftree}
  \hypo{(\sa\el{\ia},l) \Downarrow_{\mathbb{N}} n \notin A}
  \infer1[(Case$_{\notin}$)]{( \qcase{\sa\el{\ia}}{\ST_0}{\ST_1},\stdv)\semto{0}(\bot,\stdv)}
  \end{prooftree}
\]
\vspace{2mm}
\[
\begin{prooftree}
  \hypo{(\sa,l) \Downarrow_{\mathcal{L}(\mathbb{N})} l' \neq [\,]}
  \hypo{(\ia,l) \Downarrow_{\Z} n}
  \hypo{(\ST^{\proc}\{n/\xa\},\kpsi,A, l' )\semto{m} (\topbot,\kpsip,A,l')}
  \infer3[(Call$_{\topbot}$)]{(\call \proc\el{\ia}(\sa);,\stdv)\semto{m+1} (\topbot,\kpsip,A,l)}
\end{prooftree}
\]
\vspace{2mm}
\[
\begin{prooftree}
 \hypo{(\sa,l) \Downarrow_{\mathcal{L}(\mathbb{N})} [\,]}
 \infer1[(Call$_{[\,]}$) ]{(\call \proc\el{\ia}(\sa);,\stdv)\semto{1} (\top,\stdv)}
\end{prooftree}
\]
\caption{Semantics of statements}
\label{table:operationalsemantics}
\hrulefill
\end{figure}

We now give a brief intuition on the rules of Figure~\ref{table:operationalsemantics}.
Rules (Asg$_\bot$) and (Asg$_\top$) evaluate the application of a unitary operator, corresponding to $\U^f(\ja)$, to a qubit $\sa[\ia]$. For that purpose, they evaluate the index $n$ of $\sa[\ia]$ in the global memory. Rule (Asg$_\bot$) deals with the error case, where the corresponding qubit is not allowed to be accessed. Rule (Asg$_\top$) deals with the success case: the new quantum state is obtained by applying the result of tensoring the evaluation of $\U^f(\ja)$ to the right index.
Rules (Seq$_\topbot$) and (Seq$_\bot$) evaluate the sequence of statements, depending on whether an error occurs or not. The (If) rule deals with classical conditionals in a standard way. The three rules (Case$_\top$), (Case$_\bot$), and (Case$_{\notin}$) evaluate the qubit index $n$ of the control qubit $\sa\el{\ia}$. Then they check whether this index belongs to the set of accessible qubits (is $n$ in $A$?). If so, the two statements $\ST_0$ and $\ST_1$ are intuitively evaluated in superposition, on the projected state $\bra{0}_{n}\kpsi$ and $\bra{1}_{n}\kpsi$, respectively. During these evaluations, the index $n$ cannot be accessed anymore.
The rule (Call$_{[\,]}$) treats the base case of a procedure call when the sorted set parameter is empty. 
In the non-empty case, rule (Call$_\topbot$) evaluates the sorted set parameter $\sa$ to $l'$ and the integer parameter $\xa$ to $n$. It returns the result of evaluating the procedure statement $\ST^{\proc}\{n/\xa\}$, where $n$ has been substituted to $\xa$, w.r.t. the updated qubit pointers list $l'$.

For a given program $\PR=\D::\ST$ and a given quantum state $\kpsi \in \mathcal{H}_{2^n}$, the \textit{initial configuration} for input $\kpsi$ is $c_{init}(\kpsi) \triangleq (\ST,\kpsi,\{1,\dots,n\},[1,\dots,n]) \in \text{Conf}_{n}$. 
A program is \emph{error-free} if there is no initial configuration $c_{init}(\kpsi)$ such that $c_{init}(\kpsi) \semto{} (\bot,\kpsip,A,l)$.
We write $\sem{\PR}(\kpsi) = \kpsip$, whenever $c_{init}(\kpsi) \semto{m} (\top,\kpsip,A,l)$ holds for some $m$. $(\top,\kpsip,A,l)$ is called a \emph{terminal configuration}. Let $\mathcal{H}=\bigcup_n\mathcal{H}_{2^n}$, a program \emph{terminates} if $\sem{\PR}$ is a total function in $\mathcal{H} \to \mathcal{H}$. 
Note that if a program terminates then it is obviously error-free but the converse property does not hold. Every program $\PR$ can be efficiently transformed into an error-free program $\PR_{\neg\bot}$ such that $\forall \kpsi$, if $\sem{\PR}(\kpsi)$ is defined then $\sem{\PR}(\kpsi)=\sem{\PR_{\neg\bot}}(\kpsi)$. For example, an assignment $\sa\el{\ia} \asg \U^f(\ja);$ can be transformed into the conditional statement $\tb{if } ((0 < \ia) \wedge (\ia \leq \size{\sa}))\tb{ then }\sa\el{\ia} \asg \U^f(\ja);\tb{ else } \tb{skip};$.

\begin{example}\label{ex:qft} 
A notable example of quantum algorithm is the Quantum Fourier Transform (QFT), used as a subroutine in Shor's algorithm~\cite{Shor}, and whose quantum circuit is provided below, with $R_n \triangleq \sem{\textrm{Ph}^{\lambda x . \pi/2^{x-1}}}(n)$, for $n \geq 2$. After applying Hadamard and controlled $R_n$ gates, the circuit performs a permutation of qubits using swap gates.

\begin{center}
\includegraphics[width=\textwidth]{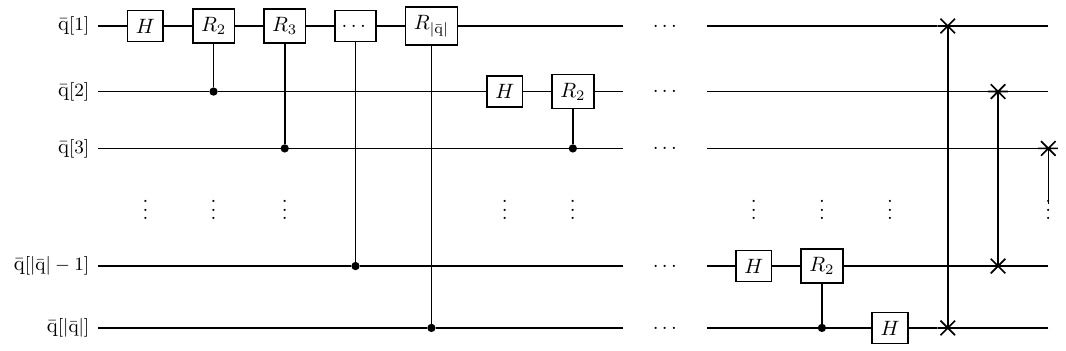}
\end{center}

Note that $\lambda x . \pi/2^{x-1}$ is a total function in $\Z\to[0, 2\pi) \cap \Tilde{\mathbb{R}}$. Hence, it is polynomial time approximable.
The above circuit can be simulated for any number of qubits $\size{\q}$ by the following \foq program $\QFT$.

\noindent\hrulefill \\
\noindent\begin{minipage}[t]{0.28\textwidth}
\begin{flushleft}
$\decl  \rec(\bar{\qa})\{$\\
$\quad   \bar{\qa}\el{1}\asg \textrm{H};$\\
$\quad  \call \rot\el{2}(\bar{\qa});$\\
$\quad   \call \rec(\bar{\qa}\ominus\el{1});\},$\\
$ $
\end{flushleft}
\end{minipage}
\begin{minipage}[t]{0.4\textwidth}
\begin{flushleft}
$\decl \rot\el{\xa}(\bar{\qa})\{$\\
$\ \ \tb{if }\size{\bar{\qa}}>1 \tb{ then}$\\
$\ \ \ \ \qcaseo \ \bar{\qa}\el{2} \tb{ of } \{$\\
$\ \ \ \ \ \ 0 \to \tb{skip}; $\\
$\ \ \ \ \ \ 1 \to \bar{\qa}\el{1}\asg \textrm{Ph}^{\lambda x. \pi/2^{x-1}}(\xa);$\\
$\ \ \ \ \}$\\
$\ \ \ \ \call \rot\el{\xa+1}(\bar{\qa}\ominus \el{2});$\\
$\ \ \tb{else skip};\},$\\
$ $
\end{flushleft}
\end{minipage}
\begin{minipage}[t]{0.3\textwidth}
\begin{flushleft}
$\decl \inv(\bar{\qa})\{$\\
$\quad \tb{if }\size{\bar{\qa}}>1 \tb{ then }$\\
$\quad \quad \tswap(\bar{\qa}\el{1},\bar{\qa}\el{\size{\bar{\qa}}});$\\
$\quad \quad \call \inv(\bar{\qa}\ominus\el{1,\size{\bar{\qa}}});$\\
$\quad \tb{else skip};\} ::$\\
\end{flushleft}
\end{minipage}
$\call \rec(\bar{\q});$ $\call \inv(\bar{\q});$

\noindent\hrulefill
\end{example}

\paragraph{Derivation tree and level.}
Given a configuration $c$ wrt a fixed program $\PR$, $\pi_\PR \rightslice c$ denotes the \emph{derivation tree} of $\PR$, the tree of root $c$ whose children are obtained by applying the rules of Figures~\ref{table:semnatbool} and \ref{table:operationalsemantics}  on configuration $c$ with respect to $\PR$.
We write $\pi$ instead of $\pi_\PR \rightslice c$ when $\PR$ and $c$ are clear from the context.
Note that a derivation tree $\pi$ can be infinite in the particular case of a non-terminating computation.
When $\pi'$ is finite, $\pi \unlhd \pi'$ denotes that $\pi$ is a subtree of $\pi'$. 

In the case of a terminating computation $\pi \rightslice c$, there exists a terminal configuration $c'$ and a level $m\in\N$ such that $c \semto{m} c'$ holds.
In this case, the level of $\pi$ is defined as $\lv_\pi \triangleq m$.
Given a \foq program $\PR$ that terminates, $\level_\PR$ is a total function in $\mathbb{N}\to\mathbb{N}$ defined as $\level_\PR(n) \triangleq \max_{\kpsi\in\mathcal{H}_{2^n}}\lv_{\pi_\PR \rightslice c_{init}(\kpsi)}$.

Intuitively, $\level_\PR(n)$ corresponds to the maximal number of non-superposed procedure calls in any program execution on an input of length $n$.

\begin{example}
Consider the program $\QFT$ of example~\ref{ex:qft}. Assume temporarily that $\QFT$ terminates (this will be shown in Example~\ref{ex:wf}).
For all $n \in \N$, $\level_{\QFT}(n)= \frac{(n+1)(n+2)}{2}+\lfloor{\frac{n}{2}} \rfloor +1$. Indeed, on sorted sets of size $n$, procedure $\rec$ is called recursively $n+1$ times and makes $n+1$ calls to procedure $\rot$ on sorted sets of size $n$, $n-1$, $\ldots$, and $1$. On sorted sets of size $n$, $\rot$ performs $n$ recursive calls. Hence the total number of calls to $\rot$ is equal to $\sum_{i=1}^{n} i$. Finally, on a sorted set of size $n$, procedure $\inv$ does $\lfloor{\frac{n}{2}} \rfloor +1$ recursive call.
\end{example}

A program $\PR$ is reversible if it terminates and there exists a program $\PR^{-1}$ such that $\sem{\PR^{-1}}\circ\sem{\PR} = Id$.

\begin{restatable}{thm}{thmrev}
\label{thm:rev}
All terminating \foq programs are reversible.
\end{restatable}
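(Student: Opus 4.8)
The plan is to construct, for any terminating \foq program $\PR=\D::\ST$, an explicit inverse $\PR^{-1}$ by a syntactic transformation $(\cdot)^{-1}$, and then to prove $\sem{\PR^{-1}}\circ\sem{\PR}=Id$. I would define $(\cdot)^{-1}$ by recursion on statements: $\tb{skip}$ is self-inverse; an assignment $\q\asg\U^f(\ia);$ is sent to $\q\asg\overline{\U^f}(\ia);$, where $\overline{\U^f}$ is the basic operator realizing $\sem{\U^f}(n)^\dagger$ (NOT is self-adjoint, and the adjoints of $\textrm{R}_Y^f$ and $\textrm{Ph}^f$ are again of the same shape, with the angle function $n\mapsto(2\pi-f(n))\bmod 2\pi$, which remains polynomial-time approximable); a sequence $\ST_1\,\ST_2$ is reversed to $\ST_2^{-1}\,\ST_1^{-1}$; conditionals and quantum cases are inverted branchwise, so $(\cif{\bb}{\ST_0}{\ST_1})^{-1}\triangleq\cif{\bb}{\ST_0^{-1}}{\ST_1^{-1}}$ and $(\qcase{\q}{\ST_0}{\ST_1})^{-1}\triangleq\qcase{\q}{\ST_0^{-1}}{\ST_1^{-1}}$; and each call $\call\proc\el{\ia}(\sa);$ is replaced by $\call\proc^{-1}\el{\ia}(\sa);$, where $\D^{-1}$ declares, for every $\decl\proc\el{\xa}(\bar{\qa})\{\ST^{\proc}\}$ in $\D$, a fresh procedure $\decl\proc^{-1}\el{\xa}(\bar{\qa})\{(\ST^{\proc})^{-1}\}$. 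One checks that $\PR^{-1}$ is again well-formed.

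The first key observation is that control flow in \foq is independent of the quantum state: every integer, Boolean, and sorted-set expression reduces through $\Downarrow$ using only the pointer list $l$; the quantum case runs both branches unconditionally; and there is no measurement. Hence, for a fixed pair $(A,l)$, the shape of the derivation tree -- which rule fires, which branch of each conditional is chosen, and the level $m$ -- is the same for every input $\kpsi$, the state merely carrying amplitudes along. Since $\PR$ terminates these trees are finite, each terminating judgment $(\ST,\stdv)\semto{m}(\top,\kpsip,A,l)$ determines a \emph{fixed linear} operator $U_{\ST,A,l}$ with $\kpsip=U_{\ST,A,l}\,\kpsi$, and the pair $(A,l)$ in the terminal configuration coincides with that of the initial one.

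The core of the proof is a single lemma, established by induction on the height of the derivation tree of $(\ST,\stdv)\semto{m}(\top,\kpsip,A,l)$: the operator $U_{\ST,A,l}$ is unitary, and for every $\kphi$ there is a terminating derivation $(\ST^{-1},\kphi,A,l)\semto{}(\top,U_{\ST,A,l}^\dagger\kphi,A,l)$, so that $\ST^{-1}$ computes $U_{\ST,A,l}^{-1}$. The base cases are $\tb{skip}$ and the empty-argument call (rule $(\textrm{Call}_{[\,]})$), both realizing $I$, together with the assignment, whose operator $I_{2^{n-1}}\otimes M\otimes I_{2^{\ell(\kpsi)-n}}$ is unitary with adjoint realized by $\overline{\U^f}$; the error rules $(\textrm{Asg}_\bot)$, $(\textrm{Case}_\bot)$, $(\textrm{Case}_{\notin})$ never fire, since a terminating program is error-free. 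For a sequence, $U=U_2U_1$ and $\ST_2^{-1}\,\ST_1^{-1}$ computes $U_1^{-1}U_2^{-1}$ by the induction hypothesis on both subderivations; for a conditional, $\bb$ selects the same branch in $\ST$ and in $\ST^{-1}$ (it depends only on $l$), so the claim reduces to that branch.

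The delicate case -- which I expect to be the main obstacle -- is the quantum case, where $U=\sum_{k}\kb{k}{n}\,U_k$ with $U_k=U_{\ST_k,A\setminus\{n\},l}$. Here one must exploit the access-control discipline: inside the case the control index $n$ is removed from $A$, so by error-freeness neither branch applies an operator to qubit $n$, whence each $U_k$ acts as the identity on that qubit and commutes with $\kb{k}{n}$. This is exactly what is needed to conclude both that $U$ is unitary (it is block-diagonal, with unitary blocks $U_0,U_1$ on the two subspaces selected by $\kb{0}{n}$ and $\kb{1}{n}$) and that $U^\dagger=\sum_{k}\kb{k}{n}\,U_k^\dagger$, which is precisely the operator computed by $\qcase{\q}{\ST_0^{-1}}{\ST_1^{-1}}$ via the induction hypothesis on each branch. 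The procedure-call case is where induction on derivation height (rather than on syntax) is essential: the subderivation evaluates $\ST^{\proc}\{n/\xa\}$ under the updated list $l'$, and the inverted call $\call\proc^{-1}\el{\ia}(\sa);$ reduces $\sa$ and $\ia$ to the same $l'$ and $n$ before running $(\ST^{\proc})^{-1}\{n/\xa\}$, so the hypothesis applies verbatim. Finally, instantiating the lemma at $c_{init}(\kpsi)$ yields $\sem{\PR}(\kpsi)=U\kpsi$ with $U$ unitary and $\sem{\PR^{-1}}(U\kpsi)=U^\dagger U\kpsi=\kpsi$; since $\PR^{-1}$ terminates, this gives $\sem{\PR^{-1}}\circ\sem{\PR}=Id$, which is exactly reversibility.
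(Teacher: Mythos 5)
Your construction of $\PR^{-1}$ coincides with the paper's (the paper likewise inverts operators via adjoint angle functions, reverses sequences, inverts branches of conditionals and quantum cases, and inverts procedure bodies in the declarations, merely reusing the procedure names instead of introducing fresh ones). The paper stops at the syntactic construction and asserts that $\sem{\PR^{-1}}\circ\sem{\PR}=Id$; your derivation-height induction showing that each terminating statement denotes a state-independent unitary whose adjoint is computed by the inverted statement (with the block-diagonality argument for the quantum case) is a correct filling-in of that omitted verification, not a different route.
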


\section{Polynomial time soundness}\label{sec:soundness}

In this section, we restrict the set of \foq programs to a strict subset, named \pfoq, that is sound for the quantum complexity class \fbqp. For this, we define two criteria: a criterion ensuring that a program terminates and a criterion preventing a terminating program from having an exponential runtime.

\paragraph{Polynomial-time {\sc foq}.}
Given two statements $\ST,\ST'$, we write $\ST \in \ST'$ to mean that $\ST$ is a substatement of $\ST'$ and $\proc \in \ST$ holds if there are $\ia$ and $\sa$ such that $\call \proc\el{\ia}(\sa); \in \ST.$ Given a program $\PR=\D::\ST$, we define the relation $>_{\PR} \subseteq \textnormal{Procedures} \times \textnormal{Procedures}$ by $\proc_1 >_{\PR} \proc_2$ if $\proc_2 \in \ST^{\proc_1}$,  for any two procedures $\proc_1,$ $\proc_2 \in \ST$. Let the partial order $\succeq_{\PR}$ be the transitive and reflexive closure of $>_{\PR}$ and define the equivalence relation $\sim_{\PR}$ by $\proc_1 \sim_{\PR} \proc_2 $ if $\proc_1\succeq_{\PR} \proc_2$ and $\proc_2 \succeq_{\PR} \proc_1$ both hold. Define also the strict order $\succ_{\PR}$ by $\proc_1 \succ_{\PR} \proc_2 $ if $\proc_1\succeq_{\PR} \proc_2$ and  $\proc_1 \not\sim_{\PR} \proc_2$ both hold.

\begin{definition}\label{def:wf}
Let \wf  be the set of \foq programs $\PR$ that are error-free and satisfy the well-foundedness constraint: $\forall \proc \in \PR,\ \forall \call \proc'\el{\ia}(\sa); \in \ST^{\proc},$
$$\proc\sim_{\PR} \proc' \Rightarrow \exists k>0, \exists \ia_1,\dots,\ia_k,\ s  = \bar{\qa}\ominus\el{\ia_1,\dots,\ia_k}.
$$
\end{definition}

\begin{restatable}{lem}{lemtermination}
\label{lem:termination}
If $\PR \in \wf$, then $\PR$ terminates.
\end{restatable}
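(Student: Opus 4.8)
The plan is to exhibit a well-founded measure on procedure calls that strictly decreases along every chain of nested recursive calls, and then to combine a well-founded induction on this measure with a structural induction on statements to prove that every derivation tree $\pi_\PR \rightslice c_{init}(\kpsi)$ is finite. Since $\PR$ is error-free, a finite derivation tree must conclude in a $\top$-configuration, so $\sem{\PR}$ is total, which is exactly termination.

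First I would set up the measure. As $\PR$ is finite, the set of procedures declared in $\PR$ is finite, so the strict order $\succ_{\PR}$ (the strict part of $\succeq_{\PR}$, in which $\sim_{\PR}$-equivalent procedures are incomparable) is a well-founded strict order. For a procedure $\proc$ and a pointer list $l$, define $\mu(\proc, l) \triangleq ([\proc]_{\sim_{\PR}}, \size{l})$, ordered lexicographically by $\succ_{\PR}$ on the equivalence-class component and by $>$ on $\N$ on the second component. Being the lexicographic product of two well-founded orders, this order is well-founded. The key decrease property is the following. Within the evaluation of a body $\ST^\proc$ started with pointer list $l$, the list component stays equal to $l$ in every sub-derivation up to the next procedure call: only rule (Call$_\topbot$) modifies the list, whereas (Seq$_\topbot$), (If), and (Case$_\top$) pass $l$ unchanged (the last merely shrinking $A$). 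Hence at any call $\call \proc'\el{\ia}(\sa); \in \ST^\proc$ the set $\sa$ is evaluated under $l$, and since $var(\ST^\proc) \subseteq \{\xa, \bar{\qa}\}$ we get $\bar{\qa} \Downarrow_{\mathcal{L}(\N)} l$ and $\sa \Downarrow_{\mathcal{L}(\N)} l'$ with $\size{l'} \le \size{l}$. A direct call gives $\proc >_{\PR} \proc'$, hence $\proc \succeq_{\PR} \proc'$, so either $\proc \succ_{\PR} \proc'$ or $\proc \sim_{\PR} \proc'$. In the first case $\mu(\proc', l')$ is already smaller in the first component. In the second case, Definition~\ref{def:wf} forces $\sa = \bar{\qa} \ominus \el{\ia_1, \dots, \ia_k}$ with $k > 0$, and inspecting rules (Rm$_\in$) and (Rm$_{\notin}$) shows that removing at least one index from a nonempty list yields a strictly shorter list (out-of-range removals collapse it to $[\,]$), so $\size{l'} < \size{l}$ and $\mu(\proc', l')$ is again strictly smaller.

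Then I would run the induction. By well-founded induction on $\mu(\proc, l)$ I show that the evaluation of any call to $\proc$ with list $l$ has a finite derivation. If $l = [\,]$, rule (Call$_{[\,]}$) closes the derivation immediately. Otherwise the body $\ST^\proc\{n/\xa\}$ is evaluated with list $l$, and I argue finiteness by an inner structural induction on this statement: the cases \textbf{skip} and assignment are immediate; sequences, conditionals, and quantum cases have only structurally smaller sub-statements evaluated with the same list $l$, covered by the inner hypothesis; and a call $\call \proc'\el{\ia}(\sa);$ either evaluates $\sa$ to $[\,]$ (finite by (Call$_{[\,]}$)) or recurses into $\proc'$ with some $l' \ne [\,]$ satisfying $\mu(\proc', l') \prec \mu(\proc, l)$, finite by the outer hypothesis. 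Hence the whole body, and thus the call to $\proc$, has a finite derivation. Applying the same structural argument to the program statement $\ST$ evaluated with $l = [1, \dots, n]$ shows that $\pi_\PR \rightslice c_{init}(\kpsi)$ is finite for every $\kpsi$. Since the rules of Figure~\ref{table:operationalsemantics} are exhaustive, a finite tree concludes in a terminal configuration, and error-freeness of $\PR \in \wf$ excludes $\bot$; therefore $c_{init}(\kpsi) \semto{m} (\top, \dots)$, so $\sem{\PR}$ is total and $\PR$ terminates.

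I expect the main obstacle to be the clean interleaving of the two inductions: the structural induction on the body must invoke the outer well-founded induction exactly at the procedure-call leaves, and one must verify that the pointer list is genuinely invariant across the body's evaluation so that the size comparison $\bar{\qa} \Downarrow_{\mathcal{L}(\N)} l$ holds at every call site. The second delicate point is the strict size decrease for same-clique calls, where the behaviour of $\ominus$ on out-of-range indices (rule (Rm$_{\notin}$)) must be accounted for to guarantee a strict decrease rather than mere non-increase.
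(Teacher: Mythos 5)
Your proposal is correct and follows essentially the same route as the paper: a lexicographic well-founded order whose first component is $\succ_{\PR}$ (with $\sim_{\PR}$-equivalent procedures compared by the second component) and whose second component is the evaluated size of the sorted-set argument, with the \wf constraint supplying the strict size decrease for same-clique calls. You spell out the double induction and the invariance of the pointer list between calls, details the paper's proof leaves implicit, but the underlying argument is the same.
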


\begin{example}\label{ex:wf}
Consider the program $\QFT$ of Example~\ref{ex:qft}. The statements of the procedure declarations define the following relation: $\rec >_{\QFT} \rec$, $\rec >_{\QFT} \rot$, $\rot >_{\QFT} \rot$, and $\inv >_{\QFT} \inv$. Consequently, $\rec \sim_{\QFT} \rec $, $\rot \sim_{\QFT} \rot$, $\inv \sim_{\QFT} \inv$, and $\rec \succ_{\QFT} \rot $ hold. For each call to an equivalent procedure, we check that the argument decreases: $\bar{\qa}\ominus\el{1}$ in $\rec$, $\bar{\qa}\ominus \el{2}$ in $\rot$, and $\bar{\qa}\ominus\el{1,\size{\bar{\qa}}}$ in $\inv$. Consequently, $\QFT \in \wf$. We deduce from Theorem~\ref{lem:termination} that $\QFT$ terminates.
\end{example}

We now add a further restriction on mutually recursive procedure calls for guaranteeing polynomial time using a notion of width.
\begin{definition}
Given a program $\PR$ and a procedure $\proc \in \PR$, the \emph{width} of $\proc$ in $\PR$, noted $\RCalls_{\PR}(\proc)$, and the \emph{width} of $\proc$ in $\PR$ \emph{relatively to statement} $\ST$, noted $w_{\PR}^{\proc}(\ST)$, are two positive integers in  $\mathbb{N}$. They are defined inductively by:
\begin{align*}
&\RCalls_{\PR}(\proc) \triangleq w^{\proc}_{\PR}(\ST^{\proc}),\\ &
w_{\PR}^{\proc}(\tb{skip};)\triangleq 0,\\ &
w_{\PR}^{\proc}(\q\emph{\asg} \U^f(\ia);)\triangleq 0,\\ &
w_{\PR}^{\proc}(\ST_1 \ \ST_2)\triangleq w^{\proc}_{\PR}(\ST_1)+ w^{\proc}_{\PR}(\ST_2),\\ &
w_{\PR}^{\proc}(\cif{\bb}{\ST_{\tb{true}}}{\ST_{\tb{false}}})\triangleq \max(w^{\proc}_{\PR}(\ST_{\tb{true}}),w^{\proc}_{\PR}(\ST_{\tb{false}})),\\ &
w_{\PR}^{\proc}(\qcase{\q}{\ST_0}{\ST_1})\triangleq \max(w^{\proc}_{\PR}(\ST_0),w^{\proc}_{\PR}(\ST_1)),\\ &
w_{\PR}^{\proc}(\call \proc'\el{\ia}(\sa);)\triangleq \begin{cases} 1& \text{if }\proc\sim_{\PR} \proc',\\ 0 &\text{otherwise}. \end{cases}
\end{align*}
\end{definition}

\begin{definition}[PFOQ]\label{def:pfoq}
Let \pfoq be the set of programs $\PR$ in \wf that satisfy the following constraint:
$\forall \proc\in \PR, \RCalls_{\PR}(\proc)\leq 1$.
\end{definition}

\begin{example}
In the program of Example~\ref{ex:qft}, $\RCalls_{\QFT}(\rec) = \RCalls_{\QFT}(\rot) = \RCalls_{\QFT}(\inv) =1$, since $\rec \succ_{\QFT} \rot$ holds.
Since $\QFT \in \wf$, by Example~\ref{ex:wf}, we conclude that $\QFT$ is a \pfoq program.
\end{example}

We now show that the level of a \pfoq program is bounded by a	polynomial in the length of its input.

\begin{restatable}{lem}{lempoly}
\label{lem:poly}
For each $\pfoq$ program $\PR$, there exists a polynomial $Q \in \N[X]$ such that $\forall n \in \N,\ \level_\PR(n) \leq Q(n)$.
\end{restatable}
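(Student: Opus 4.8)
The plan is to prove that $\level_\PR(n) \leq Q(n)$ for some polynomial $Q$ by induction on the order $\succ_\PR$ restricted to the equivalence classes of procedures. The key structural fact is the \pfoq{} constraint $\RCalls_\PR(\proc) \leq 1$, which says that within the statement of any procedure $\proc$, there is at most one procedure call to a procedure $\sim_\PR$-equivalent to $\proc$ that contributes to the level (recall the width counts only same-class calls, and quantum cases are counted by a $\max$, so superposed calls do not add up). First I would fix a \pfoq{} program $\PR$ and partition its procedures into $\sim_\PR$-equivalence classes $C_1 \succ_\PR C_2 \succ_\PR \cdots \succ_\PR C_d$, where $d$ is bounded by $\size{\PR}$. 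The idea is to bound the level stratum by stratum: a call into class $C_i$ can only trigger directly (non-superposed, same-class) recursion within $C_i$, plus finitely many calls into strictly lower classes $C_{i+1}, \dots, C_d$.

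The heart of the argument is to bound the number of same-class recursive calls along any non-superposed chain. By the well-foundedness constraint (Definition~\ref{def:wf}), every same-class call $\call\proc'\el{\ia}(\sa);$ inside $\ST^\proc$ has argument $\sa = \bar{\qa}\ominus\el{\ia_1,\dots,\ia_k}$ with $k > 0$, so the size of the sorted set strictly decreases at each such call. Hence, starting from an input of length $n$, a chain of same-class calls has length at most $n$. Since $\RCalls_\PR(\proc) \leq 1$, the $\succ_\PR$-minimal contribution to the level along any single execution branch, restricted to one class, is at most linear in $n$: the width bound guarantees that the recursion does not branch into multiple accumulating same-class calls (the quantum case $\max$ ensures branches in superposition are not summed), so the number of same-class calls is governed by the depth of decreasing-argument recursion, which is $O(n)$.

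I would then set up the induction. Let $L_i(n)$ denote the maximal level of an execution starting from a call to a procedure in class $C_i$ on an input of length $\leq n$. For the maximal class $C_d$ (no strictly lower class), each non-superposed step adds at most one to the level and the body $\ST^\proc$ contributes a fixed constant number of level-$1$ calls, of which at most one is same-class with a strictly smaller argument; unrolling the $O(n)$-deep same-class recursion and accounting for the constant amount of work (bounded by $\size{\PR}$) between successive same-class calls yields $L_d(n) \leq c \cdot n$ for some constant $c$ depending on $\size{\PR}$. For a general class $C_i$, one execution of a body in $C_i$ triggers: at most one same-class call (with strictly smaller argument), and a bounded number (at most $\size{\PR}$) of calls into strictly lower classes $C_j$ with $j > i$, each contributing at most $L_j(n)$ by the induction hypothesis. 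Unrolling the at most $n$ levels of same-class recursion gives the recurrence $L_i(n) \leq n \cdot \big(c + \size{\PR}\cdot \max_{j>i} L_j(n)\big)$, which, since there are at most $d \leq \size{\PR}$ classes, resolves to a polynomial in $n$ of degree at most $d$. Finally $\level_\PR(n) = \max_{\kpsi} \lv_{\pi_\PR \rightslice c_{init}(\kpsi)}$ is bounded by a constant plus $L_1(n)$, giving the desired polynomial $Q$.

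\textbf{The main obstacle} I anticipate is making precise the interaction between superposition and the level count: the level of a quantum case is $\max_k m_k$ rather than a sum (rules (Case$_\top$) and (Case$_\bot$)), so I must argue carefully that the level along any single execution is captured by tracking one "worst-case" non-superposed branch, and that the width function $w_\PR^\proc$ (which mirrors this $\max$ behavior on quantum cases and on conditionals) faithfully upper-bounds the number of same-class calls encountered along that branch. Formally establishing this correspondence between the static width bound and the dynamic level along a branch — likely via an auxiliary lemma relating $\lv_\pi$ for a sub-derivation rooted at a call to $\proc$ to $\RCalls_\PR(\proc)$ and the argument size — is where the real work lies; once it is in place, the polynomial bound follows from the stratified induction above.
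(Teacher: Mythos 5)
Your proposal is correct and follows essentially the same route as the paper: the paper's proof defines the \emph{rank} of a procedure (its height in the $\succ_\PR$ order), inducts on rank, and uses exactly your two key observations — that $\RCalls_\PR(\proc)\leq 1$ together with the $\max$ in the quantum-case level rule limits same-class recursion to one non-superposed call per step, and that the well-foundedness constraint bounds the depth of such a chain by the input length — to derive the bound $\level_{\D::\call \proc\el{\ia}(\sa);}(n) = O(n^{rk(\proc)+1})$. Your stratification by $\sim_\PR$-equivalence classes is the same decomposition under a different name, and your recurrence for $L_i(n)$ matches the paper's.
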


Moreover, checking whether a program is \pfoq is tractable.

\begin{restatable}{thm}{thmtrac}
\label{thm:trac}
For each \foq program $\PR$, it can be decided in time $O(\size{\PR}^2)$ whether $\PR_{\neg\bot} \in \pfoq$.
\end{restatable}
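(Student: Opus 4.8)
The plan is to reduce the three clauses defining $\pfoq$ membership — error-freeness, the well-foundedness constraint of Definition~\ref{def:wf}, and the width bound $\RCalls_{\PR_{\neg\bot}}(\proc)\le 1$ of Definition~\ref{def:pfoq} — to syntactic and graph-theoretic checks on the procedure-call structure. Error-freeness requires no test at all: $\PR_{\neg\bot}$ is error-free by construction. The only role of the transformation $\PR\mapsto\PR_{\neg\bot}$ is therefore to discharge this clause, and I would observe that it wraps every qubit access inside a classical conditional without introducing any new procedure call or altering any call argument. Hence the relations $>_{\PR_{\neg\bot}}$ and $\sim_{\PR_{\neg\bot}}$, and every width $w^{\proc}_{\PR_{\neg\bot}}$, agree with those computed from $\PR$ itself (a guard $\tb{if } \bb \tb{ then } \ST \tb{ else skip};$ has the same width as $\ST$), so the remaining two clauses may be decided directly on the call structure of $\PR$.

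I expect the transformation to be the step that carries the quadratic cost: guarding an access against the indices forbidden by its enclosing quantum cases requires one comparison per enclosing control qubit, so the guard attached to an access grows with its nesting depth, and a program consisting of $\Theta(\size{\PR})$ nested cases with $\Theta(\size{\PR})$ accesses at the bottom yields $\size{\PR_{\neg\bot}} = \Theta(\size{\PR}^2)$. This is within the claimed bound, and producing $\PR_{\neg\bot}$ is straightforward in $O(\size{\PR}^2)$ time.

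For the combinatorial checks I would build the call graph $G$ whose vertices are the declared procedures and whose edges are the pairs with $\proc_1 >_{\PR}\proc_2$, read off by one scan of the declarations; $G$ has $O(\size{\PR})$ vertices and edges. The classes of $\sim_{\PR}$ are exactly the strongly connected components of $G$, since $\proc_1\sim_{\PR}\proc_2$ iff each is reachable from the other along call edges; labelling each procedure with its class takes $O(\size{\PR}^2)$ by a reachability search from every procedure (or $O(\size{\PR})$ with a linear-time SCC algorithm). With these labels, the well-foundedness check is a single scan: for each call $\call \proc'\el{\ia}(\sa);$ inside a body $\ST^{\proc}$ with $\proc\sim_{\PR}\proc'$, verify that $\sa$ is syntactically $\bar{\qa}\ominus\el{\ia_1,\dots,\ia_k}$ with $k\ge 1$, where $\bar{\qa}$ is the formal parameter of $\proc$. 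The width check evaluates each $\RCalls_{\PR}(\proc)=w^{\proc}_{\PR}(\ST^{\proc})$ by one bottom-up traversal of $\ST^{\proc}$ following the defining recurrence, resolving the call case in $O(1)$ from the class labels and saturating the additions at $2$ since only the threshold $1$ matters; summed over all bodies this is $O(\size{\PR})$. The program is accepted iff all checks pass, and correctness is immediate from the two definitions.

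The main obstacle is conceptual rather than computational: recognising that error-freeness comes for free from $\PR_{\neg\bot}$ while well-foundedness and the width bound are invariant under the transformation, so that the latter two can be decided on $\PR$ in linear time, and that the only place a square is forced is the guarding transformation (equivalently the naive mutual-reachability computation). Once this is seen, every individual step fits comfortably within $O(\size{\PR}^2)$.
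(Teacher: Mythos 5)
Your proposal is correct and follows essentially the same route as the paper's (much terser) proof: compute the call relation and its equivalence classes $\sim_{\PR}$ in $O(\size{\PR}^2)$, check the well-foundedness condition by a syntactic scan of the arguments of calls between equivalent procedures, evaluate each width bottom-up from the defining recurrence, and observe that error-freeness of $\PR_{\neg\bot}$ holds by construction. The only divergence is your accounting of the transformation $\PR\mapsto\PR_{\neg\bot}$ as potentially quadratic in output size where the paper asserts it is computable in time $O(\size{\PR})$ with output of size $O(\size{\PR})$; your observation that the call structure, the well-foundedness condition, and all widths are invariant under the guarding transformation makes this discrepancy immaterial to the stated bound.
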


\paragraph{Quantum Turing machines and FBQP.}
Following Bernstein and Vazirani \cite{BernsteinVazirani}, a $k$-tape \emph{Quantum Turing Machine} (QTM), with $k \geq 1$, is defined by a triplet $(\Sigma,Q,\delta)$ where $\Sigma$ is a finite alphabet including a blank symbol $\#$, $Q$ is a finite set of states with an initial state $\tms_0$ and a final state $\tms_\top \neq \tms_0$, and $\delta$ is the quantum transition function in
$Q\times \Sigma^k \to \Tilde{\mathbb{C}}^{Q\times\Sigma^k\times\{L,N,R\}^k}$;
 $\{L,N,R\}$ being the set of possible movements of a head on a tape. Each tape of the QTM is two-way infinite and contains cells indexed by $\Z$.
A QTM successfully terminates if it reaches a superposition of only the final state $\tms_\top$. A QTM is said to be \textit{well-formed} if the transition function $\delta$ preserves the norm of the superposition (or, equivalently, if the time evolution of the machine is unitary).
The starting position of the tape heads is the \emph{start cell}, the cell indexed by $0$. If the machine terminates with all of its tape heads back on the start cells, it is called \textit{stationary}. We will use \textit{stationary} in the case where the machine terminates with its input tape head in the first cell, and all other tape heads in the last non-blank cell.  We will further refer to a QTM as being \textit{in normal form} if the only transitions from the final state $\tms_\top$ are towards the initial state $\tms_0$. These will be important conditions for the composition and branching constructions of QTMs. If a QTM is well-formed, stationary, and in normal form, we will call it \textit{conservative}~\cite{Yamakami20} (N.B.: our notion of stationary QTM differs but can be shown to be equivalent to the definition of stationary QTM in~\cite{Yamakami20}).

A configuration $\gamma$ of a $k$-tape QTM is a tuple $(s,\overline{w},\overline{n})$, where $s$ is a state in $Q$, $\overline{w}$ is a $k$-tuple of words in $\Sigma^*$, and $\overline{n}$ is a $k$-tuple of indexes (head positions) in $\Z$. An initial (final) configuration $\gamma_{init}$ (resp. $\gamma_{fin}$) is a configuration of the shape $(\tms_0,\overline{w},\overline{0})$ (resp. $(\tms_\top,\overline{w},\overline{0})$). We use $\gamma(w)$ to denote a configuration $\gamma$ where the word $w$ is written on the input/output tape. Following~\cite{BernsteinVazirani}, we write $\mathcal{S}$ to represent the inner-product space of finite complex linear combinations of configurations of the QTM $M$ with the Euclidean norm. 
A QTM $M$ defines a linear time operator $U_M : \mathcal{S} \to \mathcal{S}$, that outputs a superposition of configurations $\sum_i \alpha_i \ket{\gamma_i}$ obtained by applying a single-step transition of $M$ to a configuration $\ket{\gamma}$ (\ie, $U_M \ket{\gamma} = \sum_i \alpha_i \ket{\gamma_i}$). Let $U_M^t$, for $t \geq 1$, be the $t$-steps transition obtained from $U_M$ as follows: $U_M^1 \triangleq U_M$ and $U_M^{t+1} \triangleq U_M \circ U_M^t$.
Given a quantum state $\kpsi=\sum_{w \in \{0,1\}^{n}} \alpha_w\ket{w}$ and a configuration $\gamma$, let $\gamma(\kpsi) \in \mathcal{S}$ be the quantum configuration defined by $\gamma(\kpsi) \triangleq \sum_{w \in \{0,1\}^{n}} \alpha_w \ket{\gamma(w)}$.

A quantum function $f : \mathcal{H} \to \mathcal{H}$ is computed by the QTM $M$ in time $t$ if for any $\kpsi \in  \mathcal{H}$, 
$U_M^t (\gamma_{init}(\kpsi)) = \gamma_{fin}(f(\kpsip))$.
Given $T:\N\to\N$ and a quantum function $f$, we say that the QTM $M$ computes $f$ in time $T$ if for inputs of length $n$, $M$ computes $f$ in time $T(n)$.

\begin{definition}
Given two  functions $f:\{0, 1\}^\star\to\{0, 1\}^\star$, $F:\mathcal{H}\to\mathcal{H}$, and a value $p\in[0,1]$,
we say that $f$ is computed by $F$ with probability $p$ if $\forall x\in\{0, 1\}^\star,\ \size{\bra{f(x)}F(\ket{x})}^2 \geq p$.
\end{definition}
 
The class \fbqp is the functional extension of the complexity class \bqp. 

\begin{definition}[\cite{BernsteinVazirani}]\label{def:fbqp} A function $f \in \{0,1\}^\star \to \{0,1\}^\star$ is in \fbqp iff there exist a QTM $M$ and a polynomial $P \in \N[X]$ s.t.
$M$ computes $f$ in time $P$ with probability $\frac{2}{3}$.
\end{definition}

A function $f\in\{0,1\}^\star\to\{0,1\}^\star$ has a \emph{polynomial bound} $P\in\mathbb{N}[X]$  if $\forall n \in \N,  \forall x\in\{0,1\}^n, \exists k \leq P(n),\ f(x)\in\{0, 1\}^{k}$.
Functions in \fbqp have a polynomial bound as the size of their output is smaller than the polynomial time bound.

\paragraph{Soundness.}
We show that QTMs can simulate the function computed by any terminating \foq program. The time complexity of this simulation depends on the length of the input quantum state and on the level of the considered program.

\begin{restatable}{lem}{lemsimulation}
\label{lem:simulation}
For any terminating \foq program $\PR$, there exists a conservative QTM $M$ that computes $\sem{\PR}$ in time $O(n+n \times \level_\PR(n))$.
\end{restatable}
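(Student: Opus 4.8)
The plan is to give an explicit construction of a conservative QTM $M$ simulating $\sem{\PR}$ by recursion on the big-step derivation structure. The machine will use $k$ tapes: an input/output tape holding the $n$-qubit quantum state (as a superposition of bitstrings $\ket{w}$ with $w\in\{0,1\}^n$), an auxiliary tape encoding the current list of qubit pointers $l$, a tape encoding the accessible set $A$, and a stack-like tape managing the procedure-call environment (the statement being executed, together with the integer-parameter bindings). The key idea is that each rule of Figure~\ref{table:operationalsemantics} translates into a conservative QTM, and the composition/branching constructions for conservative QTMs (well-formed, stationary, in normal form) let me glue these together. The conservative conditions are exactly what make sequential composition and quantum branching well-defined, so I would first fix those composition lemmas (citing \cite{BernsteinVazirani,Yamakami20}) before assembling the simulation.

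First I would handle the classical, non-superposed evaluation of expressions: the reductions $\Downarrow$ of Figure~\ref{table:semnatbool} are all polynomial-time computable (list indexing, removal, size, and the polynomial-time approximable operator entries $f(n)$), so a classical reversible subroutine evaluates any index or matrix entry in time polynomial in $n$ and in $\size{\PR}$, leaving its tape heads stationary. The genuinely quantum steps are (Asg$_\top$), which applies a fixed $2\times 2$ unitary $M=\sem{\U^f}(n)$ to the qubit at position $n$ of the input tape — this is a single conservative rotation acting on one cell — and (Case$_\top$), which is handled by reading the control qubit at position $n$, conditionally dispatching to the QTM for $\ST_0$ or $\ST_1$ on the corresponding projected branch, and recombining via the $\sum_k\ket{k}_n\!\bra{k}_n$ structure. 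The branching QTM construction of \cite{BernsteinVazirani} realizes exactly this controlled execution while preserving unitarity.

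For the time bound I would argue by the level. Each leaf rule (Skip), (Asg), (Call$_{[\,]}$) costs time $O(n)$ for tape traversal and the (polynomial-time) expression evaluation; a sequence, conditional, or quantum case adds only a constant bookkeeping overhead plus the cost of its subcomputations, and crucially a quantum case runs its two branches in superposition, so its time is the maximum, not the sum — this matches the $\max_k m_k$ in (Case$_\top$). A procedure call (Call$_\topbot$) increments the level by one and contributes one unit of genuinely new, non-superposed work, namely re-binding the pointer list $l'$ and the integer $n$ and traversing the environment tape, again at cost $O(n)$. Summing over the computation, the total number of non-superposed "heavy" steps is exactly $\level_\PR(n)$, each costing $O(n)$, plus an initial $O(n)$ to set up the tapes; hence the running time is $O(n + n\times\level_\PR(n))$, as claimed.

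The main obstacle is faithfully simulating the quantum case in superposition while keeping the machine well-formed and stationary. Unlike a classical conditional, the two branches $\ST_0,\ST_1$ operate on the two projected components $\bra{0}_n\kpsi$ and $\bra{1}_n\kpsi$ and must be executed coherently, then recombined without measuring the control qubit; moreover the branches may themselves contain nested calls and cases of differing depths, so the QTM must run them for a common (maximal) number of steps and resynchronize all tape heads to the start configuration before recombining. Ensuring that the branch subroutines are each conservative and that their parallel composition restores stationarity — so that the control qubit's two amplitudes reconverge into a single unitary step consistent with the $\max_k m_k$ level — is the delicate part; the normal-form condition, which forces the only transition out of $\tms_\top$ to go back to $\tms_0$, is what allows the iterated branch subroutine to be padded to uniform length and cleanly reused, and this is where I would spend most of the care in the full proof.
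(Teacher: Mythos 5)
Your proposal follows essentially the same route as the paper's proof: a structural induction on statements building a multi-tape conservative QTM, with classical expression evaluation done reversibly, sequencing and conditionals handled via the Bernstein--Vazirani dovetailing and branching lemmas, branches padded to uniform running time, and the $O(n + n\times\level_\PR(n))$ bound obtained by charging $O(n)$ per non-superposed call exactly as the level counts them. The only cosmetic difference is that you keep an explicit tape for the accessible set $A$, whereas the paper realizes the removal of the control index implicitly by blanking that cell of the input tape during the quantum-case branches and restoring it afterwards.
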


Now we show that any $\pfoq$ program computes a \fbqp function.

\begin{restatable}{thm}{thmsoundness}
\label{thm:soundness}
Given a \pfoq program $\PR$, a function $f:\{0, 1\}^\star\to\{0, 1\}^\star$, and a value $p\in(\frac{1}{2},1]$.
If $f$ is computed by $\sem{\PR}$ with probability $p$ then $f \in \fbqp$.
\end{restatable}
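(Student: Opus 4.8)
The plan is to chain together the earlier simulation lemma (Lemma~\ref{lem:simulation}), the polynomial bound on the level (Lemma~\ref{lem:poly}), and the definition of \fbqp (Definition~\ref{def:fbqp}). Given a \pfoq program $\PR$, a function $f:\{0,1\}^\star\to\{0,1\}^\star$, and a probability $p\in(\frac12,1]$ such that $f$ is computed by $\sem{\PR}$ with probability $p$, I must exhibit a QTM $M'$ and a polynomial $P$ witnessing that $f\in\fbqp$, i.e.\ that $M'$ computes $f$ in time $P$ with probability $\frac23$.

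**First I would** invoke Lemma~\ref{lem:simulation} to obtain a conservative QTM $M$ that computes $\sem{\PR}$ in time $O(n + n\times\level_\PR(n))$. Since $\PR$ is a \pfoq program, it is in particular in \wf (by Definition~\ref{def:pfoq}), so Lemma~\ref{lem:poly} supplies a polynomial $Q\in\N[X]$ with $\level_\PR(n)\le Q(n)$ for all $n$. Substituting this bound into the running time of $M$ yields that $M$ computes $\sem{\PR}$ in time $O(n + n\cdot Q(n))$, which is bounded by some polynomial $P_0\in\N[X]$. Thus $\sem{\PR}$ is computed by a QTM running in polynomial time, and on input $\ket{x}$ the machine $M$ produces $\sem{\PR}(\ket{x})$ in its final configuration, from which measuring the output tape yields $f(x)$ with probability $\size{\bra{f(x)}\sem{\PR}(\ket{x})}^2\ge p$ by the hypothesis that $f$ is computed by $\sem{\PR}$ with probability $p$.

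**The main obstacle** is bridging the gap between an arbitrary success probability $p>\frac12$ and the specific threshold $\frac23$ required by Definition~\ref{def:fbqp}; this is handled by standard \emph{probability amplification}. Since $p>\frac12$, a constant number of independent repetitions followed by a majority vote boosts the success probability above $\frac23$ (indeed above any constant below $1$), by a Chernoff-type bound: running $\PR$'s simulation $r$ times and taking the most frequent output among the $r$ measurement results succeeds with probability at least $1-e^{-2r(p-\frac12)^2}$, which exceeds $\frac23$ for a suitable constant $r$ depending only on $p$. Because $M$ is conservative (well-formed, stationary, in normal form), these $r$ copies can be composed into a single well-formed QTM $M'$; the total running time is $r\cdot P_0(n)$ plus the polynomial overhead of the majority-vote computation, which is again polynomial, say $P\in\N[X]$.

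**Finally I would** conclude that $M'$ computes $f$ in time $P$ with probability $\frac23$, so that $f\in\fbqp$ by Definition~\ref{def:fbqp}. The only subtlety worth spelling out is that the repetitions must be performed on fresh copies of the input and that measurement of the intermediate outputs is permitted, since \fbqp allows bounded-error classical postprocessing of the quantum computation's output; the conservative form of $M$ is precisely what makes the composition of repeated runs into one well-formed QTM routine.
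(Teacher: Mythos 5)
Your proof follows exactly the route of the paper's own argument: combine Lemma~\ref{lem:simulation} with Lemma~\ref{lem:poly} to obtain a conservative QTM computing $\sem{\PR}$ in polynomial time, then conclude via Definition~\ref{def:fbqp}. The one place you go beyond the paper is the explicit amplification from probability $p>\frac{1}{2}$ to the $\frac{2}{3}$ threshold by a constant number of repetitions and majority vote --- the paper's proof silently passes from $p$ to $\frac{2}{3}$, so your extra paragraph fills a real (if routine) gap rather than adding unnecessary machinery, and your use of the conservative form of $M$ to justify composing the repeated runs is the right way to make that step precise.
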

\begin{proof}
Using Lemma~\ref{lem:poly} and Lemma~\ref{lem:simulation}.\qed
\end{proof}

\section{{FBQP} completeness}\label{sec:includesFBQP}

In this section we show that any function in \fbqp can be faithfully approximated by a \pfoq program. Toward this end, we show that Yamakami's \cite{Yamakami20} \fbqp{}-complete function algebra can be exactly simulated in \pfoq.

\paragraph{Yamakami's function algebra.}\label{sec:Yamakami}
A characterization of \fbqp was provided in~\cite{Yamakami20} using a function algebra, named $\qfp$. 
Given a quantum state $\kpsi$ and a word $w \in \{0,1\}^n$, with $n \leq l(\kpsi)$. $\kpsi$ can be written as $\kpsi = \sum_i \alpha_i \ket{w_i z_i}$, with $w_i \in \{0,1\}^n$ and $z_i \in \{0,1\}^{l(\kpsi) -n}$. We write $\braket{w}{\psi}$ as an abuse of notation for the quantum state defined by $\braket{w}{\psi} \triangleq \sum_i \alpha_i \braket{w}{w_i} \ket{z_i}.$

\begin{definition}$\qfp$ is the smallest class of functions including the basic initial functions $\{I,Ph_\theta, Rot_\theta, NOT, SWAP \}$, with $\theta \in [0,2\pi)\cap \Tilde{\mathbb{C}}$, 
\begin{itemize}
\item $I(\kpsi)\triangleq\kpsi$
\item $Ph_\theta(\kpsi)\triangleq\ket{0}\braket{0}{\psi}+e^{\mathrm{i} \theta}\ket{1}\braket{1}{\psi}$
\item $Rot_\theta(\kpsi)\triangleq\cos\theta\kpsi+\sin\theta(\ket{1}\braket{0}{\psi}-\ket{0}\braket{1}{\psi})$
\item $NOT(\kpsi)\triangleq\ket{0}\braket{1}{\psi}+\ket{1}\braket{0}{\psi}$
\item $SWAP(\kpsi)\triangleq\begin{cases}\kpsi &\text{if }l(\kpsi)\leq 1\\
\sum_{a,b\in\{0,1\}}\ket{ba}\braket{ab}{\psi} & \text{otherwise}\end{cases}$
\end{itemize}
and closed under schemes $Comp$, $Branch$, and $kQRec_t$, for $k,t \in \N$,
\begin{itemize}
\item $Comp[F,G](\kpsi)\triangleq F(G(\kpsi))$
\item $Branch[F,G](\kpsi)\triangleq
\begin{cases}
\kpsi & \textnormal{if } l(\kpsi)\leq 1\\
\ket{0}\otimes F(\braket{0}{\psi})+\ket{1}\otimes G(\braket{1}{\psi}) &\textnormal{otherwise}
\end{cases}
$
\item  $kQRec_t[F,G,H](\kpsi)\triangleq
\begin{cases}
F(\kpsi) & \textnormal{if } l(\kpsi)\leq t\\
G\left(\sum_{w \in \{0,1\}^k}\ket{w}\otimes F_w(\bra{w}{H(\kpsi)})\right) &\textnormal{otherwise}
\end{cases}$

where each $F_w \in \{kQRec_t[F,G,H],I\}$.
\end{itemize}

\end{definition}

To handle general \fbqp functions, \cite{Yamakami20} defines the extended encoding of an input $x\in\{0,1\}^\star$ as $\phi_P(\ket{x})\triangleq \ket{0^{l(\ket{x})}1}\ket{0^{P(l({\ket{x}}))}10^{11P(l(\ket{x}))+6}1}\ket{x}$, for some polynomial $P \in \N[X]$ that is an upper bound on the output size of the desired \fbqp{} function.
$\phi_P$ simply consists in the quantum state $\ket{x}$ preceded by a polynomial number of ancilla qubits.
These ancilla provide space for internal computations and account for the polynomial bound associated to polynomial time QTMs.

\begin{restatable}[\cite{Yamakami20}]{thm}{labelyama} Given $f : \{0,1\}^\star \to \{0,1\}^\star$ with polynomial bound $P\in\N[X]$, the following statements are equivalent.
\begin{enumerate}[1.]
\item The function $f$ is in \fbqp.
\item There exists $F \in \qfp$ such that 
$F \circ \phi_P$ computes $f$ with probability $\frac{2}{3}$.
\end{enumerate}
\label{lemma:yamakami}
\end{restatable}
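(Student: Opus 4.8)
The plan is to establish the two implications separately, each by structural induction, tracking how every ingredient of the function algebra $\qfp$ corresponds to a quantum Turing machine construction in the sense of Bernstein and Vazirani~\cite{BernsteinVazirani}. Since this equivalence is precisely the characterization of \fbqp obtained in~\cite{Yamakami20}, I would reconstruct that argument rather than invent a new one, and in the present paper it suffices to invoke it as cited; the sketch below records the strategy I would follow.

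For the direction $2 \Rightarrow 1$ (soundness of the algebra), I would show by induction on the structure of $F \in \qfp$ that $F$ is computed by a conservative QTM running in polynomial time. The base cases $I$, $Ph_\theta$, $Rot_\theta$, $NOT$, and $SWAP$ each act on a bounded number of leading qubits and are realized by constant-time conservative machines; the amplitudes $\theta \in \Tilde{\mathbb{C}}$ are polynomial-time approximable, which is exactly what the QTM transition function requires. For the closure schemes I would use the composition, branching, and recursion constructions for conservative QTMs: $Comp$ is handled by sequential composition (the normal-form condition resetting to the initial state between machines), $Branch$ by a quantum-controlled dispatch on the first qubit, and $kQRec_t$ by a bounded-recursion construction. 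The quantitative heart of this direction is that $kQRec_t$ strips off $k$ qubits per level, halts once the length drops to $t$, and lets only a controlled number of the sub-branches $F_w$ recurse, so the recursion depth is linear in the input length and the width stays bounded; together these give the polynomial time bound. The role of $\phi_P$ is to supply the polynomial block of ancilla qubits used as scratch space for the intermediate computations.

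For the harder direction $1 \Rightarrow 2$ (completeness of the algebra), I would start from a polynomial-time conservative QTM $M$ computing $f$ and simulate it inside $\qfp$. The idea is to encode a full configuration of $M$ — its state, tape contents, and head positions — into the qubit register made available by $\phi_P$, and to express a single transition step $U_M$ as a $\qfp$ function built from the basic operators together with $Comp$ and $Branch$. Because $M$ halts within $P(n)$ steps, the entire computation is obtained by iterating this step function a polynomial number of times, which I would realize with the bounded recursion scheme $kQRec_t$: the length parameter of the state decreases down the recursion and thereby serves as a clock counting off the polynomially many steps. The probabilistic threshold $\frac{2}{3}$ transfers verbatim, since the simulation is exact on the amplitudes.

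The main obstacle is this completeness direction, and specifically the faithful encoding of an arbitrary QTM step as a $\qfp$ term. One must decompose the local, reversible, norm-preserving transition of $M$ into the rigidly positional primitives of $\qfp$ — single-qubit rotation, phase, negation, and the neighbor $SWAP$ — dispatched solely through $Branch$, and then drive the polynomially many iterations purely through the length-decreasing recursion $kQRec_t$ without exceeding the width that keeps the running time polynomial. Matching the head movements and tape updates of $M$ to the fixed positional action of $SWAP$ and $Branch$ is the delicate bookkeeping that lies at the center of Yamakami's construction.
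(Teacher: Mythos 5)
The paper states this theorem as an imported result, attributed directly to \cite{Yamakami20}, and provides no proof of its own anywhere in the text or appendix; your decision to simply invoke the citation is therefore exactly the paper's approach. Your accompanying two-directional sketch of Yamakami's argument is a plausible outline of the external proof, but it is supplementary material the paper does not attempt to reproduce.
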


We show the following result by structural induction on a function in $\qfp$.

\begin{restatable}{thm}{thmQFPtoFOQ}
\label{thm:QFP->FOQ}
Let $F$ be a function in \qfp. Then there exists a \pfoq program $\PR$ such that $\sem{\PR} = F$.
\end{restatable}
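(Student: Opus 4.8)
The plan is to proceed by structural induction on the construction of $F\in\qfp$, building for each $F$ a distinguished procedure $\proc_F$ that computes $F$ when called on its whole sorted-set argument. To make the induction go through I would strengthen the statement: for every $F\in\qfp$ there is a set of procedure declarations $\D_F$ containing a procedure $\proc_F$ such that (i) the program $\D_F::\call\proc_F(\bar{\q});$ lies in \wf with $\RCalls(\proc_F)\le 1$, and (ii) executing $\call\proc_F(\bar{\qa});$ on any sorted set computes $F$ on the qubits it points to. The theorem then follows by taking $\PR=\D_F::\call\proc_F(\bar{\q});$: clause (i) gives $\PR\in\pfoq$, so $\PR$ terminates by Lemma~\ref{lem:termination} and $\sem{\PR}$ is total, while clause (ii) gives $\sem{\PR}=F$. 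Fresh procedure names are chosen at each node of the syntax tree of $F$ so that the declaration sets of subfunctions merge without clashes, keeping declared names pairwise distinct.

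The base cases and the two non-recursive schemes are a direct fit for the \foq constructs. Each initial function acts on the first one or two qubits: $\proc_I$ has body $\skp$; $\proc_{NOT}$, $\proc_{Ph_\theta}$, $\proc_{Rot_\theta}$ apply the matching basic operator to $\bar{\qa}\el{1}$ (using the constant function $f=\lambda n.\theta$, which is polynomial-time approximable since $\theta\in\Tilde{\mathbb{R}}$); and $\proc_{SWAP}$ applies $\tswap(\bar{\qa}\el{1},\bar{\qa}\el{2});$ when $\size{\bar{\qa}}>1$ and $\skp$ otherwise. All qubit accesses are guarded (or one replaces the final program by its error-free image $\PR_{\neg\bot}$, which agrees with it where defined), so each base procedure is error-free and, being non-recursive, satisfies the well-foundedness and width constraints vacuously; matching $\sem{\textrm{NOT}}$, $\sem{\textrm{Ph}^f}$, $\sem{\textrm{R}_Y^f}$ and the $\tswap$ macro against the definitions gives (ii). For $Comp[F,G]$ the body of $\proc_{Comp}$ is the sequence $\call\proc_G(\bar{\qa});\,\call\proc_F(\bar{\qa});$, so rule (Seq) runs $G$ before $F$; for $Branch[F,G]$ the body runs $\qcase{\bar{\qa}\el{1}}{\call\proc_F(\bar{\qa}\ominus\el{1});}{\call\proc_G(\bar{\qa}\ominus\el{1});}$ when $\size{\bar{\qa}}>1$ and $\skp$ otherwise, so that reading rule (Case$_\top$) the control qubit $\bar{\qa}\el{1}$ is projected and withdrawn from the accessible set while $\proc_F,\proc_G$ act on $\bar{\qa}\ominus\el{1}$, and recombination yields exactly $\ket{0}\otimes F(\braket{0}{\psi})+\ket{1}\otimes G(\braket{1}{\psi})$. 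In both schemes the calls are to strictly smaller ($\succ_\PR$) procedures, so their contribution to the width is $0$.

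The recursion scheme $kQRec_t[F,G,H]$ is the heart of the argument. I would give $\proc_{\kQRec}$ a body that, when $\size{\bar{\qa}}>t$, runs $\call\proc_H(\bar{\qa});$, then a nest $C_k$ of quantum cases of depth $k$ over $\bar{\qa}\el{1},\dots,\bar{\qa}\el{k}$, then $\call\proc_G(\bar{\qa});$, and when $\size{\bar{\qa}}\le t$ runs $\call\proc_F(\bar{\qa});$. At the leaf of $C_k$ indexed by $w\in\{0,1\}^k$ the statement is $\call\proc_{\kQRec}(\bar{\qa}\ominus\el{1,\dots,k});$ if $F_w=kQRec_t[F,G,H]$ and $\skp$ if $F_w=I$. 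Iterating rule (Case$_\top$) $k$ times shows that $C_k$ sends $\kpsi$ to $\sum_{w\in\{0,1\}^k}\ket{w}\otimes F_w(\braket{w}{\cdot})$ acting on $\bar{\qa}\ominus\el{1,\dots,k}$, and sandwiching this between $\proc_H$ and $\proc_G$ reproduces the defining equation, the classical guard $\size{\bar{\qa}}>t$ selecting the base value $F(\kpsi)$ exactly when $l(\kpsi)\le t$. Well-foundedness holds because every recursive call removes the $k\ge 1$ qubits $\el{1,\dots,k}$, strictly shrinking the argument. The decisive point is the width bound: all (up to $2^k$) recursive calls sit inside the nested quantum cases, and since $w^{\proc}_{\PR}$ takes the \emph{maximum} over the two branches of a quantum case while each leaf carries at most one self-call, $C_k$ has width $\le 1$; the surrounding sequence adds the $\proc_H,\proc_G$ calls of width $0$, so $\RCalls_\PR(\proc_{\kQRec})\le 1$.

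I expect the $kQRec_t$ case to be the main obstacle, for two reasons. First, the semantic correspondence needs a careful induction on $k$ through rule (Case$_\top$), tracking how the accessible set shrinks as control qubits are consumed and how the per-branch outputs recombine, so as to certify the exact identity with $\sum_w\ket{w}\otimes F_w(\braket{w}{\cdot})$ rather than an approximation. Second, and conceptually most important, one must confirm that quantum control holds the width at $1$ despite the exponential branching; this is precisely the property that lets \pfoq absorb $kQRec$ while staying polynomial time, and it rests on the $\max$ (not the sum) in the quantum-case clause of $w^{\proc}_{\PR}$. The remaining bookkeeping — error-freeness via guards or the $\PR_{\neg\bot}$ transformation, distinctness of merged procedure names, and the admissibility of the constant angle functions $f$ — is routine.
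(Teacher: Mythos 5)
Your proposal is correct and follows essentially the same route as the paper: structural induction on $F\in\qfp$, with assignments/\textrm{SWAP} for the initial functions, sequencing for $Comp$, a quantum case on $\bar{\qa}\el{1}$ with calls on $\bar{\qa}\ominus\el{1}$ for $Branch$, and for $kQRec_t$ a depth-$k$ nest of quantum cases whose leaves are either a recursive call on $\bar{\qa}\ominus\el{1,\dots,k}$ or $\tb{skip}$, sandwiched between calls to $\proc_H$ and $\proc_G$. The decisive observation you highlight — that the width stays at $1$ because $w^{\proc}_{\PR}$ takes the maximum over quantum-case branches — is exactly the paper's argument; your only departures (uniformly wrapping each subfunction in a fresh procedure, and adding explicit size guards) are cosmetic.
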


We are now ready to state the completeness result.

\begin{restatable}{thm}{thmcompleteness}
\label{thm:completeness}
For every function $f$ in \fbqp with polynomial bound $Q \in \N[X]$, there is a \pfoq program $\PR$ such that $\sem{\PR}\circ \phi_Q$ computes $f$ with probability $\frac{2}{3}$.
\end{restatable}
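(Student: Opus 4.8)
The plan is to obtain this statement as an immediate corollary of the two results already established in this section: Yamakami's characterization of \fbqp by the function algebra \qfp (Theorem~\ref{lemma:yamakami}), and the exact simulation of every \qfp function by a \pfoq program (Theorem~\ref{thm:QFP->FOQ}). No fresh quantitative analysis of the error probability will be needed, precisely because the latter simulation is an \emph{exact} semantic equality rather than an approximation; this is what makes the two results compose cleanly.

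Concretely, I would proceed in three steps. First, given $f \in \fbqp$ with polynomial bound $Q$, I apply the forward implication ($1 \Rightarrow 2$) of Theorem~\ref{lemma:yamakami} with $P \triangleq Q$; this yields a function $F \in \qfp$ such that $F \circ \phi_Q$ computes $f$ with probability $\frac{2}{3}$, that is, $\size{\bra{f(x)}(F\circ\phi_Q)(\ket{x})}^2 \geq \frac{2}{3}$ for every $x \in \{0,1\}^\star$. Second, I apply Theorem~\ref{thm:QFP->FOQ} to this $F$ to obtain a \pfoq program $\PR$ with $\sem{\PR} = F$. Third, since this is an equality of functions on $\mathcal{H}$, we have $\sem{\PR}\circ\phi_Q = F\circ\phi_Q$ as maps, so the probability bound transfers verbatim and $\sem{\PR}\circ\phi_Q$ computes $f$ with probability $\frac{2}{3}$, which is exactly the claim.

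Thus the entire difficulty lies upstream, and the present completeness statement is essentially a repackaging of Theorem~\ref{thm:QFP->FOQ}. I would expect the genuine obstacle to be the structural induction establishing that theorem: one must translate each initial function of \qfp ($I$, $Ph_\theta$, $Rot_\theta$, $NOT$, $SWAP$) into a \pfoq statement, and realize the closure schemes $Comp$, $Branch$, and $kQRec_t$ by the corresponding \pfoq constructs (sequencing, a quantum case on the head qubit, and bounded-width recursion, respectively), all while certifying that each resulting program stays within the width-$1$ discipline of Definition~\ref{def:pfoq}, so that the level of the composed program remains polynomial via Lemma~\ref{lem:poly}. Once that exact correspondence is in hand, the present theorem follows by the single substitution $\sem{\PR} = F$ described above.
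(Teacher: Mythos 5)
Your proposal is correct and follows exactly the paper's own proof: invoke Theorem~\ref{lemma:yamakami} to obtain $F\in\qfp$ with $F\circ\phi_Q$ computing $f$ with probability $\frac{2}{3}$, then Theorem~\ref{thm:QFP->FOQ} to get a \pfoq program $\PR$ with $\sem{\PR}=F$, and conclude by the exact equality of functions. Your additional remarks about where the real work lies (the structural induction in Theorem~\ref{thm:QFP->FOQ}) are accurate but not part of this proof.
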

\begin{proof}
By Theorem~\ref{lemma:yamakami} and Theorem~\ref{thm:QFP->FOQ}.\qed
\end{proof}

\section{Compilation to polynomial-size quantum circuits}
In this section, we provide an algorithm that compiles a \pfoq program on a given input length $n \in \N$ into a quantum circuit of size polynomial in $n$.

\emph{Quantum circuits}~\cite{D89} are a well-known graphical computational model for describing quantum computations. Qubits are represented by wires. Each unitary transformation $U$ acting on $n$ qubits can be represented as a gate $U$ with $n$ inputs and $n$ outputs. A circuit $C$ is an element of a PROP category (\cite{ML65}, a symmetric strict monoidal category) whose morphisms are generated by gates $G$ and wires. Let $\mathds{1}$ be the identity circuit (for any length) and $\circ$ and $\otimes$ be the composition and product, respectively. By abuse of notation, given $k$ circuits $C^1, \ldots, C^k$, $\circ_{i=1}^k C^i$ will denote the circuit $\tilde{C}^1 \circ \dots \circ \tilde{C}^k$, where each circuit $\tilde{C^i}$ is obtained by tensoring $C^i$ appropriately with identities so that the output of $C^i$ matches the input of $C^{i+1}$. By construction, a circuit is acyclic.
Each circuit $C_n$ can be indexed by its number $n \in \N$ of input wires (i.e., non ancilla qubits) and computes a function $\sem{C_n} \in \mathcal{H}_{2^n} \to \mathcal{H}_{2^n}$.
To deal with functions in $\mathcal{H}\to \mathcal{H}$, we consider families of circuits $(C_n)_{n\in\mathbb{N}}$, that are sequences of circuits such that each $C_n$ encodes computation on quantum states of length $n$. Hence each circuit has $n$ input qubits plus some extra ancilla qubits. These ancillas can be used to perform intermediate computations but also to represent functions whose output size is strictly greater than their input size. 
To avoid the consideration of families encoding undecidable properties, we put a uniformity restriction.

\begin{definition}
A family of circuits $(C_n)_{n\in\mathbb{N}}$ is said to be \emph{uniform} if there exists a polynomial time Turing machine that takes $n$ as input and outputs  a representation of $C_n$, for all $n\in\mathbb{N}$. 
\end{definition}

In quantifying the complexity of a circuit, it is necessary to specify the considered elementary gates, and define the complexity of an operation as the number of elementary gates needed to perform it. In our setting, we consider the following set of universal elementary gates $\{ {R}_Y(\pi/4), {P}(\pi/4),{CNOT}\}$. The size $\# C$ of a circuit $C$ is equal to the number of its gates and wires.

\begin{definition}
A family of circuits $(C_n)_{n\in\mathbb{N}}$ is said to be \emph{polynomial-size} with $\alpha\in \N \to \N$ ancilla qubits if there exists a polynomial $P \in \N[X]$ such that, for each $n\in\mathbb{N}$, $\# C_n \leq P(n)$ and the number of  ancilla qubits in $C_n$ is exactly $\alpha(n)$.
\end{definition}

Let $\chi_m:\mathcal{H}_{2^n}\to\mathcal{H}_{2^{n+m}}$ be defined by $\chi_m(\kpsi)\triangleq \kpsi\otimes\ket{0^m}$, for a state $\kpsi$ of size $n$.
Let $\xi_m: \mathcal{H}_{2^n}\to\mathcal{H}_{2^m}$, with $m\leq n$, be defined by $\xi_m(\ket{\psi}) \triangleq \sum_{w\in\{0, 1\}^{m}}\sum_{z\in\{0, 1\}^{n - m}}\braket{wz}{\psi}\ket{w}$.
Finally, let $\size{w}$, for $w \in \{0,1\}^\star$, be the size of the word $w$. 

\begin{restatable}{thm}{thm:yao}\tb{(Adapted from \cite{Yao1993} and \cite{NielsenChuang})}
\label{thm:yao}
A function $f: \{0, 1\}^{\star}\to\{0, 1\}^{\star}$ is in \fbqp iff there exists a uniform polynomial-size family of circuits $(C_n)_{n\in\mathbb{N}}$ with $\alpha$ ancilla qubits s.t.  $\forall x\in\{0,1\}^\star$, 
$\left|\braket{f(x)}{\xi_{\size{f(x)}}\circ\sem{C_{\size{x}}}\circ \chi_{\alpha(\size{x})}(\ket{x})}\right|^2\geq \frac{2}{3}$.

\end{restatable}

In Theorem~\ref{thm:yao}, $\sem{C_{\size{x}}}$ is a function in $\mathcal{H}_{2^{\size{x} + \alpha(\size{x})}} \to \mathcal{H}_{2^{\size{x} + \alpha(\size{x})}}$
The function $\chi_{\alpha(\size{x})}$ pads the input with ancilla in state $\ket{0}$ to match the circuit dimension.
The function $\xi_{\size{f(x)}}$ projects the output of the circuit to match the length of the function output $\size{f(x)}$.
Hence, for $\ket{x} \in \mathcal{H}_{2^{\size{x}}}$, $\xi_{\size{f(x)}}\circ\sem{C_{\size{x}}}\circ \chi_{\alpha(\size{x})}(\ket{x}) \in \mathcal{H}_{2^{\size{f(x)}}}$.

\paragraph{Compilation to circuits.}
For each \pfoq program $\PR$, the existence of a polynomial-size uniform family of circuits $(C_n)_{n\in\N}$ that computes $\sem{\PR}$ is entailed by the combination of Lemma~\ref{lem:poly} and Theorem~\ref{thm:yao}. However, due to the complex machinery of QTM, the constructions of both proofs cannot be used in practice to generate a circuit. In this section, we exhibit an algorithm that compiles directly a \pfoq program to a polynomial-size circuit. Note that this compilation process requires some care since recursive procedure calls in quantum cases may yield an exponential number of calls. 
The remainder of this section will be devoted to presenting an algorithm, named \compile, which, for a given \pfoq program $\PR$ and a given integer $n$ produces a circuit $C_n$ such that $\forall \kpsi \in \mathcal{H}_{2^{n}},\, \sem{\PR}(\kpsi) = \xi_{n}\circ\sem{C_n}\circ \chi_{\alpha(n)}(\kpsi)$. 

The \compile algorithm uses two subroutines, named \comprec and \optimize, and is defined by $\compile(\PR,n) \triangleq \comprec(\PR,[1,\ldots,n],\cdot)$. 

The subroutine \comprec (Algorithm~\ref{alg:comprec}) generates the circuit inductively on the program statement. It takes as inputs: a program $\PR$, a list of qubit pointers $l$, and a control structure $\cs$. A \emph{control structure} $cs$ is a partial function in $\N \to \{0,1\}$, mapping a qubit pointer to a control value (of a quantum case).  Let $\cdot$ be the control structure of empty domain. For $n \in \N$ and $k \in \{0,1\}$, $cs[n := k]$ is the control structure obtained from $cs$ by setting $cs(n) \triangleq k$. For a given $x\in\{0,1\}^\star$, we say that state $\ket{x}$ \emph{satisfies} $cs$ if, $\forall n\in dom(cs)$, $cs(n)=k\Rightarrow \size{\bra{k}_n\ket{x}}^2=1$. Two control structures $cs$ and $cs'$ are \emph{orthogonal} if there does not exist a state $\ket{x}$ that satisfies $cs$ and $cs'$. Note that if $\exists i \in dom(cs) \cap dom(cs'),\ cs(i)+cs'(i)=1$ then $cs$ and $cs'$ are orthogonal.

\begin{algorithm}
\caption{(\comprec)\\
\textbf{Input:} $(\PR,l,cs)\in \text{Programs}\times \LN \times (\mathbb{N} \to \{0,1\})$}\label{alg:comprec}
\begin{algorithmic}
\State{\tb{Let }$\D::\ST = \PR$\tb{ in}}
\If{$\ST= \skp$}
\State{$C\leftarrow \mathds{1}$}\Comment{Identity circuit}
\\
\ElsIf{$\ST= \sa\el{\ia}\asg\U^f(\ja);$\textbf{ and }$(\sa\el{\ia},l)\Downarrow_{\mathbb{N}}n$\textbf{ and }$(\U^f(\ja),l)\Downarrow_{\mathbb{C}^{2\times 2}}M$}
\State{$C\leftarrow M(cs,[n])$}\Comment{Controlled gate}
\\
\ElsIf{$\ST= \ST_1\,\ST_2$}
\State{$C\leftarrow \comprec(\D::\ST_1,l,cs)\circ \comprec(\D::\ST_2,l,cs)$}\Comment{Composition}
\\
\ElsIf{$\ST=\tb{if }\bb\tb{ then }\ST_{\tb{true}}\tb{ else }\ST_{\tb{false}} \textbf{ and }(\bb,l)\Downarrow_{\B}b$}
\State{$C\leftarrow \comprec(\D::\ST_b,l,cs)$}\Comment{Conditional}
\\
\ElsIf{$\ST=\qcase{\sa\el{\ia}}{\ST_0}{\ST_1}$\textbf{ and }$(\sa\el{\ia},l)\Downarrow_{\mathbb{N}}n$}
\State{$C \leftarrow \comprec(\D::\ST_0,l,cs[n:=0])\circ \,\comprec(\D::\ST_1,l,cs[n:=1])$}\Comment{Quantum case}
\\
\ElsIf{$\ST= \call \proc\el{\ia}(\sa)$\textbf{ and }$(\sa,l)\Downarrow_{\mathcal{L}(\mathbb{N})}[\,]$}
\State $ C\leftarrow\mathds{1}$\Comment{Nil call}
\\
\ElsIf{$\ST= \call \proc\el{\ia}(\sa)$\textbf{ and }$(\sa,l)\Downarrow_{\mathcal{L}(\mathbb{N})}l'\not=[\,]$\textbf{ and }$(\ia,l)\Downarrow_\Z n$}
\If{$\RCalls_{\PR}(\proc)=0$}
\State $ C\leftarrow\comprec(\D::\ST^{\proc}\{n/\xa\},l',cs)$\Comment{Non-recursive call}
\ElsIf{$\RCalls_{\PR}(\proc)=1$}
\State $ C\leftarrow\optimize(\D,[(cs,\ST^{\proc}\{n/\xa\})], \proc, l',\{\})$\Comment{Recursive call}
\EndIf
\EndIf\\

\Return{C}
\end{algorithmic}
\end{algorithm}

Given a control structure $cs$ and a statement $\ST$, a \emph{controlled statement} is a pair $(cs,\ST) \in \CST \triangleq (\mathbb{N}\to\{0,1\}) \times \textnormal{Statements}$. Intuitively, a controlled statement $(cs,\ST)$ denotes a statement controlled by the qubits whose indices are in $dom(cs)$. For a unitary gate $U\in\mathcal{H}_{2^n}\to\mathcal{H}_{2^n}$, a control structure $cs$, and a list of pointers $l=[x_1,\ldots,x_n] \in\LN$ such that $\{x_1,\ldots,x_n\}\cap dom(cs)=\emptyset$, $\cg{U}{cs}{l}$ denotes the circuit applying gate $U$ on qubits $\bar{\q}\el{x_1}, \ldots, \bar{\q}\el{x_n}$, whenever $\forall m \in dom(cs),$ $\bar{\q}\el{m}$ is in state $\ket{cs(m)}$. As demonstrated in~\cite{NielsenChuang}, this circuit can be built with $O(card(dom(cs)))$ elementary gates and ancillas, and a single controlled-$U$ gate. 

\begin{example}
As an illustrative example, consider a binary gate $U$ and a control structure $cs$ such that $dom(cs)=\{1,2,3\}$, $cs(1)=cs(2)=1$, and $cs(3)=0$. Also consider a list $l=[4,5] \in \LN$. The circuit $\cg{U}{cs}{l}$ is provided in Figure~\ref{fig:CU}.
\end{example}
\begin{figure}[t]
\begin{center}
\scalebox{0.7}{
\includegraphics{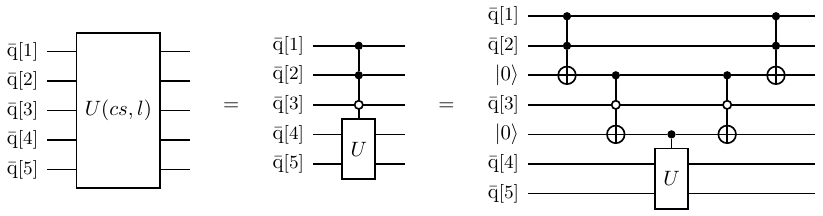}
}
\end{center}
\caption{Example of circuit $\cg{U}{cs}{l}$}\label{fig:CU}
\end{figure}
Similarly, we can define a generalized Toffoli gate as a circuit of the shape $NOT(cs,n)$. Since $card(dom(cs))$ will not scale with the size of the input, such a circuit has a constant cost in gates and ancillas and can thus be considered as an elementary gate.
We will also be interested in rearranging wires under a given control structure. For two lists of qubit pointers $l_1=[x_1,\ldots,x_n],\ l_2=[x'_1,\ldots,x'_n] \in \LN$, define ${SWAP}(cs,l_1,l_2)$ as the circuit that swaps the wires in $l_1$ with wires in $l_2$, controlled on $cs$. This circuit needs in the worst case one ancilla and $O(n)$ controlled $SWAP$ gates (also known as Fredkin gates).

 Let $\mathcal{D}\triangleq \mathcal{D}(\textnormal{Procedures}\times \Z \times \N\to \N \times \LN)$ be the set of dictionaries mapping keys of the shape $(\proc,i,j)$ to pairs of the shape $(a, l)$, where $i$ is the value of a classical parameter, $j$ is the size of a sorted set, and $a$ is a qubit index.
We will denote the empty dictionary by $\{\}$.
Let also $a \leftarrow {\tb{new }} ancilla()$ be an instruction that sets $a$ to a fresh qubit index.
 
The subroutine \optimize (Algorithm~\ref{alg:optimize}) treats the complex cases where circuit optimizations (merging) are needed, that is for recursive procedure calls.
It takes as input a sequence of procedure declarations $\D$, a list of controlled statements $l_{\CST}$, a procedure name $\proc$, a list of qubit pointers $l$, and a dictionary $\No$.
The subroutine iterates on list $l_{\CST}$ of controlled statements, indicating the statements left to be treated together with their control qubits. When recursive procedure calls appear in distinct branches of a quantum case, the algorithm merges these calls together. For that purpose, it uses new ancilla qubits as control qubits.
Given procedure calls of shape  $\call \proc[\ia](\sa);$, with respect to a given list $l \in \LN$, such that
$(\ia,l) \Downarrow_\Z i$, $(\sa, l)\Downarrow_{\LN} l'$, and $(\size{\sa},l) \Downarrow_\N j$.
If the key $(\proc,i,j)$ already exists in the dictionary $\No$, the associated ancilla is re-used, otherwise, 
$\No[\proc, i, j]$ is set to $(a,l')$. We can assume w.l.o.g. that the statement controlled on the ancilla can be treated only after all the re-uses of the ancilla. This can be done without increasing the total complexity of \optimize{}.

Some extra ancillas $e$ are also created for swapping wires and are not explicitly indexed since they are not revisited by the subroutine, and are just considered unique. 
Ancillas $a$ and $e$ are indexed and treated as input qubits, therefore they can be part of the domain of control structures.

\begin{algorithm}
\caption{\textbf{(\optimize)} Build circuit for recursive procedure $\proc$ \\
\textbf{Inputs:} $(\D,l_{\CST},\proc,l,\No)\in \textnormal{Decl}\times \mathcal{L}(\CST) \times \textnormal{Procedures} \times \LN \times \mathcal{D}$}\label{alg:optimize}

\begin{algorithmic}
\State $C_{\text{L}}\leftarrow\mathds{1}$; $C_{\text{R}}\leftarrow\mathds{1}$; $\PR\leftarrow \D::\tb{skip};$

\While{$l_{\CST}\neq [\,]$}
\State $(cs,\ST) \leftarrow hd(l_{\CST});\ l_{\CST} \leftarrow tl(l_{\CST})$\\

\If{$\ST=\ST_1\,\ST_2$}
\If{$w_\PR^\proc(\ST_1)=1$}
\State $l_{\CST}\leftarrow l_{\CST}\MVAt[(cs,\ST_1)];\ C_\text{R}\leftarrow \comprec(\D::\ST_2,l,cs)\circ C_\text{R}$
\Else
\State $l_{\CST}\leftarrow l_{\CST}\MVAt[(cs,\ST_2)];\ C_\text{L}\leftarrow C_\text{L}\circ \comprec(\D::\ST_1,l,cs)$
\EndIf
\EndIf\\

\If{$\ST=\tb{if }\bb\tb{ then }\ST_{\tb{true}}\tb{ else }\ST_{\tb{false}}\textbf{ and } (\bb,l) \Downarrow_\B b$}
\If{$w_\PR^\proc(\ST_b)=1$}
\State $l_{\CST}\leftarrow l_{\CST}\MVAt[(cs,\ST_b)]$
\Else
\State $C_L\leftarrow C_L\circ\comprec(\D::\ST_b,l,cs)$
\EndIf
\EndIf\\

\If{$\ST=\qcase{s\el{\ia}}{\ST_0}{\ST_1}$\textbf{ and }$(s\el{\ia},l)\Downarrow_{\mathbb{N}}n$}
\If{$w_\PR^\proc(\ST_0)=1 \textbf{ and } w_\PR^\proc(\ST_1)=1$}
\State $ l_{\CST}\leftarrow l_{\CST}\MVAt[(cs[n:=0],\ST_0),(cs[n:=1],\ST_1)]$
\ElsIf{$w_\PR^\proc(\ST_1)=0$}
\State $ l_{\CST}\leftarrow l_{\CST}\MVAt[(cs[n:=0],\ST_0)];$
\State $C_\text{R}\leftarrow\comprec(\D::\ST_1,l,cs[n:=1])\circ C_\text{R}$
\ElsIf{$w_\PR^\proc(\ST_0)=0$}
\State $l_{\CST}\leftarrow l_{\CST}\MVAt[(cs[n:=1],\ST_1)];$
\State $C_\text{R}\leftarrow \comprec(\D::\ST_0,l,cs[n:=0])\circ C_\text{R}$
\EndIf
\EndIf\\

\If{$\ST=\call \proc'\el{\ia}(\sa)$\tb{ and }$(\sa,l)\Downarrow_{\mathcal{L}(\mathbb{N})}l'\not=[\,]$\textbf{ and }$(\ia,l)\Downarrow_\Z n$}
\If{$(\proc',n,|l'|)\in\No$} 
\State{\tb{Let }$(a, l'') = \No[\proc',n,|l'|]$\tb{ in}}
\State $e \leftarrow \tb{new } ancilla();$
\State $C_\text{L}\leftarrow C_\text{L}\circ NOT(cs,e)\circ NOT(\cdot[e=1], a)\circ SWAP(\cdot[e=1],l',l'');$
\State $C_\text{R}\leftarrow SWAP(\cdot[e=1],l'',l')\circ NOT(\cdot[e=1], a)\circ NOT(cs,e) \circ C_\text{R}$ 
\Else 
\State $a \leftarrow \tb{new }ancilla()$
\State $\No[\proc',n,|l'|]\leftarrow (a,l');$
\State{$C_\text{L}\leftarrow C_\text{L}\circ NOT(cs,a);\ C_\text{R}\leftarrow NOT(cs,a)\circ C_\text{R};$}
\State{$l_{\CST}\leftarrow l_{\CST}\MVAt[(\cdot[a=1],\ST^{\proc'}\{n/\xa\})]$}

\EndIf
\EndIf
\EndWhile\\

\Return $C_L \circ C_R$
\end{algorithmic}
\end{algorithm}

\begin{restatable}{thm}{thm:circuits}
\label{thm:circuits}
For any $ \PR$ in \pfoq, there is $Q \in \N[X]$, $\forall n \in \N$, $\forall \kpsi \in \mathcal{H}_{2^{n}},$ $\sem{\PR}(\kpsi) = \xi_{n}\circ\sem{\tb{compile}(\PR,n)}\circ \chi_{\alpha(n)}(\kpsi)$ and $ \# \tb{compile}(\PR,n) \leq Q(n)$.
\end{restatable}

\begin{example}
$\compile(\QFT,n)$ outputs the circuit provided in Example~\ref{ex:qft}. Notice that there is no extra ancilla as no procedure call appears in the branch of a quantum case. 
\end{example}

\paragraph{Polynomial-size circuits.}
We show Theorem~\ref{thm:circuits} by exhibiting that any exponential growth of the circuit can be avoided by the \compile algorithm using an argument based on orthogonal control structures.
With a linear number of gates and a constant number of extra ancillas, we can merge calls referring to the same procedure, on different branches of a quantum case, when they are applied to sorted sets of equal size.
An example of the construction is given in Figure~\ref{fig:example-merging} where two instances of a gate $U$ are merged into one using $SWAP$ gates and gates controlled by orthogonal control structures.

\begin{figure}[h]
\begin{center}
\includegraphics[width=0.8\textwidth]{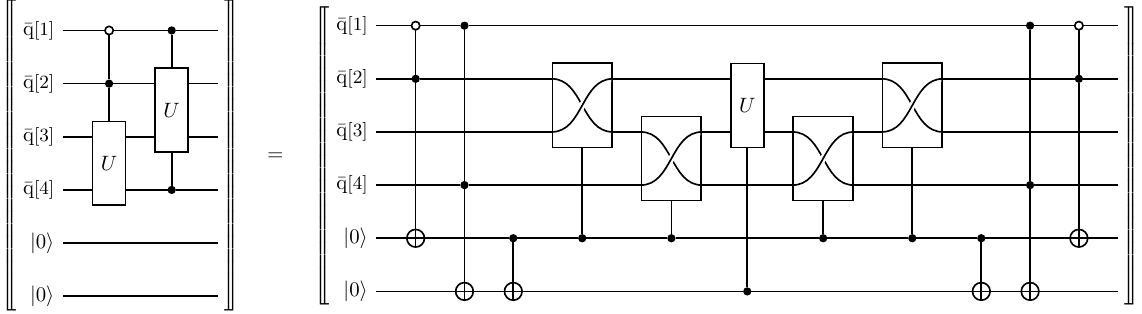}
\end{center}
\caption{Example of circuit optimization.}
\label{fig:example-merging}
\end{figure}

The following proposition shows that multiple uses of a gate can be merged in one provided they are applied to orthogonal control structures.
\begin{restatable}{lem}{joiningf}
\label{lem:joiningf}
For any circuit $C_n \triangleq \circ_{i=1}^k U(cs_i,l_i)$, with a unitary gate $U$, pairwise orthogonal $cs_1, \dots, cs_k\in\CST$, and $l_1, \dots l_k\in\LN$, there exists a circuit $C$ using one controlled gate $U$, $O(k n)$ gates, and $O(k)$ ancillas, and such that $\sem{C} = \sem{C_n}$.
\end{restatable}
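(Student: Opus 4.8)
The plan is to lift to the abstract setting the merging construction performed by \optimize (Algorithm~\ref{alg:optimize}, illustrated in Figure~\ref{fig:example-merging}), and to prove its correctness directly from the orthogonality hypothesis. The guiding observation is that, because $cs_1,\dots,cs_k$ are pairwise orthogonal, on any computational basis state at most one of the gates $\cg{U}{cs_i}{l_i}$ can act non-trivially; hence $C_n$ really applies $U$ to at most one of the target lists, the one selected by the $cs_i$. First I would fix $l_1$ as the canonical list on which the single copy of $U$ will act, and introduce one ancilla flag $a$ (initialized in $\ket{0}$) recording whether some branch fires. I build $C \triangleq C_{\text{L}} \circ \cg{U}{\cdot[a=1]}{l_1} \circ C_{\text{R}}$, where $C_{\text{L}}$ performs, for $i=1$, a single toggle $NOT(cs_1,a)$, and, for each $i \geq 2$, allocates a one-qubit routing flag $e_i$, sets it with $NOT(cs_i,e_i)$, toggles the global flag with $NOT(\cdot[e_i=1],a)$, and moves the content of $l_i$ onto $l_1$ with $SWAP(\cdot[e_i=1],l_i,l_1)$; and where $C_{\text{R}}$ is the reverse composition of the inverses of these (involutive) operations, so that $C_{\text{R}} = C_{\text{L}}^{-1}$ and every ancilla is uncomputed. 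Thus $C$ is a conjugation of a single controlled-$U$ by the routing circuit $C_{\text{L}}$.

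The correctness claim $\sem{C}=\sem{C_n}$ is proved by linearity, reducing to basis states $\ket{x}$ with ancillas in $\ket{0}$. By orthogonality, either no $cs_i$ is satisfied by $\ket{x}$, or exactly one, say $cs_j$, is. If none fires, every flag stays $\ket{0}$, no controlled swap triggers, the controlled $U$ is skipped, and $C$ acts as the identity, matching $C_n$. If $cs_j$ fires, then $a$ is set to $\ket{1}$ -- directly when $j=1$, through $e_j$ when $j\geq 2$ -- so the single $\cg{U}{\cdot[a=1]}{l_1}$ is active; for $j\geq 2$ the controlled swap has first brought the content of $l_j$ onto $l_1$, so $U$ transforms exactly that content, and $C_{\text{R}}=C_{\text{L}}^{-1}$ then swaps the result back onto $l_j$ and restores $a$ and $e_j$ to $\ket{0}$. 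Tracking the data across the controlled swap, the gate $U$, and the inverse swap shows that the net action is $U$ applied to $l_j$ with $l_1$ and all other registers unchanged, which is precisely the action of $\cg{U}{cs_j}{l_j}$, \ie of $C_n$ on $\ket{x}$. Since in every case the ancillas return to $\ket{0}$, they factor out and $\sem{C}=\sem{C_n}$ follows.

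For the complexity bounds, $C$ contains exactly one controlled $U$ gate, as required. Each $NOT(cs_i,\cdot)$ and each $NOT(\cdot[e_i=1],a)$ is a generalized Toffoli whose control domain does not grow with $n$, so by the construction of~\cite{NielsenChuang} it costs $O(1)$ gates and ancillas, and there are $O(k)$ of them. Each of the $k-1$ controlled swaps $SWAP(\cdot[e_i=1],l_i,l_1)$ costs $O(n)$ Fredkin gates and one ancilla. Summing gives $O(kn)$ gates overall; the ancillas are the flag $a$, the $k-1$ routing flags $e_i$, the $O(1)$ ancilla per swap, and the $O(1)$ ancillas of the single controlled-$U$, for a total of $O(k)$.

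The hard part will be the bookkeeping in the firing case: one must check precisely that sandwiching a single $U$ between a controlled swap and its inverse realizes $U$ on the intended list while leaving the canonical list $l_1$ and every other wire untouched, and that routing through the one-qubit flags $e_i$ keeps the controlled swaps cheap and their controls disjoint from the data they move. Orthogonality is indispensable exactly here, since without it two simultaneously satisfied control structures would toggle $a$ twice, cancel the flag, and wrongly suppress $U$. Some additional care is needed so that, when the merged lists share wires, the controlled swap (built with its own ancilla, as in~\cite{NielsenChuang}) still routes $l_i$ onto $l_1$ faithfully and is exactly undone by $C_{\text{R}}$, guaranteeing that the ancillas are left clean and that the operator identity $\sem{C}=\sem{C_n}$ is exact.
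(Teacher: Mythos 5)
Your construction is, up to relabeling (you route everything onto $l_1$ with a global flag $a$ and per-branch flags $e_i$, whereas the paper routes onto $l_k$ with ancillas $a_1,\dots,a_k$ and uses $a_k$ as the global flag), exactly the paper's circuit $C^1 \circ U(\cdot[a_k=1],l_k) \circ C^2$, and your correctness and cost arguments match the paper's. The proposal is correct and takes essentially the same approach.
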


Now we show that orthogonality  is an invariant property of \compile.

\begin{restatable}{lem}{lemortho}
\label{lem:ortho}
Orthogonality is an invariant property of the control structures in $l_{\CST}$ of the subroutine \tb{optimize}.
In other words, for any two distinct pairs $(cs, \ST)$, $(cs', \ST')$ in $l_{\CST}$, $cs$ and $cs'$ are orthogonal.
\label{thm:orthogonality}
\end{restatable}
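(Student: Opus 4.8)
The plan is to prove the statement as a loop invariant of \optimize, by induction on the number of iterations of the \textbf{while} loop, the invariant being exactly the conclusion: at the start of every iteration the control structures occurring in $l_{\CST}$ are pairwise orthogonal. Since \optimize is always entered with the singleton list $l_{\CST} = [(cs, \ST^{\proc}\{n/\xa\})]$, the invariant holds vacuously at the outset, there being no two distinct pairs. For the inductive step I would fix an iteration, assume the invariant for the current list, pop its head $(cs, \ST)$, and analyse the shape of $\ST$, checking that each element newly appended to $l_{\CST}$ is orthogonal both to the other freshly appended elements and to the elements inherited from the previous list. The one subtlety I would flag up front is that, once recursive calls are reached, control structures may constrain fresh ancilla indices rather than input qubits, so orthogonality must be read on the subspace actually reachable at that point of the circuit, where the gate $NOT(cs, a)$ has forced $a = 1 \Leftrightarrow cs$; I return to this below.

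Most cases are immediate. For $\ST = \ST_1\,\ST_2$ and for the conditional, the algorithm appends a single pair carrying the \emph{same} control $cs$ as the popped head (the other branch is pushed into $C_L$ or $C_R$ via \comprec and leaves $l_{\CST}$ untouched), so orthogonality is inherited verbatim from the hypothesis. For a quantum case $\qcase{\sa\el{\ia}}{\ST_0}{\ST_1}$ with $\sa\el{\ia} \Downarrow_{\N} n$, the appended controls are extensions $cs[n:=0]$ and $cs[n:=1]$ of $cs$: they disagree on $n$, hence are mutually orthogonal by the stated sufficient condition, and since extending a control only shrinks its set of satisfying states, each stays orthogonal to the inherited elements exactly as $cs$ did. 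Here I would invoke error-freeness of \pfoq programs to guarantee $n \notin dom(cs)$: the semantics of a quantum case removes $n$ from the accessible set $A$, so any re-control of $n$ inside a branch would reduce to $\bot$, which a \wf program forbids; this makes $cs[n:=k]$ a genuine extension. The single-branch subcases append only one such extension and are handled identically, and the dictionary-hit subcase of a recursive call appends nothing at all (it only edits $C_L, C_R$ through the $SWAP$/$NOT$ construction), so the list strictly shrinks and the invariant is inherited directly.

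The crux, and the step I expect to be the real obstacle, is the \emph{fresh} recursive-call subcase, where a new pair $(\cdot[a:=1], \ST^{\proc'}\{n/\xa\})$ is appended with $a$ a brand-new ancilla. As an abstract control structure, $\cdot[a:=1]$ shares no index with the surviving elements, so literal index-disagreement cannot be used; instead I would strengthen the invariant to track, for each control structure, the original-qubit condition it encodes, and show that this encoding preserves orthogonality. Concretely, the very gate $NOT(cs, a)$ that the algorithm inserts into $C_L$ (with its mirror into $C_R$) entangles the fresh ancilla so that $a = 1$ holds exactly on the states selected by $cs$; hence the new element's effective control is precisely $cs$, the control of the pair just popped, which by the induction hypothesis was orthogonal to every surviving element of $l_{\CST}$. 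The new element therefore inherits that orthogonality. The remaining work is bookkeeping: verifying that this grounding of a control (replacing each controlling ancilla by the condition under which its defining $NOT$ gate fired) is well defined and stable under the list operations, and that it is precisely this grounded orthogonality, equivalently orthogonality on the reachable subspace, that Lemma~\ref{lem:joiningf} consumes when merging gates of a shared signature. Showing that the ancilla bookkeeping never silently breaks orthogonality, rather than the elementary set-disagreement cases, is where the proof really has to do its work.
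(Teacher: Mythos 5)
Your overall strategy coincides with the paper's: an induction over the processing of $l_{\CST}$ with a singleton base case; the composition, conditional, and quantum-case steps handled by noting that the appended controls are either $cs$ itself or the two extensions $cs[n:=0]$ and $cs[n:=1]$; and the fresh-ancilla recursive call handled by observing that $NOT(cs,a)$ grounds $\cdot[a:=1]$ to exactly the popped control $cs$, whose orthogonality to every surviving pair is the induction hypothesis. Your extra care in justifying $n\notin dom(cs)$ for the quantum case is a refinement the paper leaves implicit.

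The one step that does not go through as written is the dictionary-hit subcase of a recursive call. You dismiss it on the grounds that nothing is appended to $l_{\CST}$, so "the invariant is inherited directly". But under the grounded reading of orthogonality --- which you yourself correctly insist is the only reading under which controls on distinct ancillas can be compared at all --- this case is not inert: the gates $NOT(cs,e)$ and $NOT(\cdot[e=1],a)$ inserted into $C_L$ change the grounding of $a$, ORing the popped control $cs$ into the condition under which $a=1$ holds. The pair $(\cdot[a:=1],\ \ST^{\proc'}\{n/\xa\})$ created when $a$ was first allocated is still pending in $l_{\CST}$ at that moment (that is precisely why the key is found in $\No$), so its \emph{effective} control is silently enlarged even though the list is untouched, and one must verify that it remains orthogonal to every other pending control. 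The paper does exactly this with a nested induction on the number of calls merged into $a$: it maintains that $\cdot[a:=1]$ grounds to a disjunction $cs_1\vee\dots\vee cs_k$ of pairwise orthogonal structures, argues that the incoming $cs$ is orthogonal to each $cs_i$ (this is where the width-$1$ condition of \pfoq enters, since the $cs_i$ belong to pairs already popped and are no longer covered by the list invariant) and to all other pending controls (outer induction hypothesis). All the ingredients appear in your write-up, but they are filed under "bookkeeping" in precisely the case that needs them; that bookkeeping is the actual content of the lemma and should be promoted to an explicit nested induction.
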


\begin{restatable}{thm}{thmcompilepoly}
\label{thm:compilepoly}
For any $\PR$ in \pfoq{}, \tb{\compile}$(\PR,n)$ runs in time $O(n^{2\size{\PR}+1})$.
\end{restatable}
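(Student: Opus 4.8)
The goal is to bound the running time of $\compile(\PR,n)$ by $O(n^{2|\PR|+1})$. Let me think about what drives the cost.

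The plan is to analyze the recursion structure of the two mutually-calling subroutines. The entry point $\compile(\PR,n) = \comprec(\PR,[1,\dots,n],\cdot)$ walks the program statement; the only genuinely expensive branch is the recursive-procedure case, which dispatches to $\optimize$. So the first step is to bound a single pass through $\comprec$ on a fixed statement, ignoring the recursive descent into procedure bodies. Each syntactic construct (skip, assignment, sequence, conditional, quantum case, nil/non-recursive call) does a constant amount of structural work plus the expression evaluations $\Downarrow$, each of which runs in time polynomial in the length of the pointer list $|l| \le n$ and in $|\PR|$. Building a controlled gate $\cg{U}{cs}{l}$ costs $O(\card(dom(cs)))$ which is $O(n)$. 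So one "local" step costs a polynomial in $n$ that does not itself recurse.

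The crux is to count how many times $\optimize$ iterates its while-loop and how the list $l_{\CST}$ grows. This is where I would lean hard on the earlier results. By Lemma~\ref{lem:ortho}, the control structures in $l_{\CST}$ are pairwise orthogonal, and by the dictionary mechanism keyed on $(\proc',n,|l'|)$, once a recursive call to a given procedure at a given argument-size has been registered, subsequent calls reuse the ancilla rather than pushing a new controlled statement. The key quantitative fact is that $\RCalls_\PR(\proc)\le 1$ for a \pfoq program (Definition~\ref{def:pfoq}): within one procedure body, at most one branch carries a recursive (width-1) call, so the while-loop expands the body of $\proc$ a number of times controlled by how the argument sorted-set shrinks. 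Since a recursive call always strips at least one qubit (the well-foundedness constraint of Definition~\ref{def:wf}), the sorted-set size strictly decreases, so each procedure is unrolled $O(n)$ times per invocation depth. I would argue that the total number of controlled statements ever pushed into $l_{\CST}$, across the whole execution, is polynomial: at most $O(n)$ distinct $(\proc',i,j)$ keys per procedure, and $O(|\PR|)$ procedures, the nesting of mutually-recursive (via $\succ_\PR$) layers contributing the dependence in the exponent.

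The main obstacle, and where the exponent $2|\PR|$ comes from, is controlling the nesting depth of calls between procedures that are $\succ_\PR$-comparable but not $\sim_\PR$-equivalent. Each time $\comprec$ descends into a non-recursive call (width $0$) it substitutes the procedure body and recurses on a strictly smaller sorted set or to a strictly lower procedure in the $\succ_\PR$ order; there are at most $|\PR|$ such strata, and within each stratum $\optimize$ may unroll $O(n)$ times, with each unrolled body potentially re-invoking $\comprec$ on substatements. The careful bookkeeping is to set up a recurrence $T_d(n) \le \poly(n)\cdot n \cdot T_{d-1}(n)$ over the stratum depth $d \le |\PR|$, whose solution is $n^{O(|\PR|)}$; pinning the constant in the exponent to exactly $2|\PR|+1$ requires accounting for both the $O(n)$ unrollings per stratum and the per-step $O(n)$ expression/gate-construction cost, together with the observation that orthogonality (Lemma~\ref{lem:ortho}) guarantees no blow-up beyond this by preventing the same controlled statement from being processed twice. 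I would present this as an induction on $\succ_\PR$-depth, invoking Lemma~\ref{lem:poly} to certify that the total number of non-superposed calls (hence the aggregate loop count) is polynomially bounded, and then multiply through the per-iteration cost to obtain the stated bound.
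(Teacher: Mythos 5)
Your overall skeleton matches the paper's: bound the work done by a single \optimize invocation via the dictionary $\No$, stratify procedures by their depth in the $\succ_{\PR}$ order (the paper calls this the \emph{rank}, reusing the definition from the proof of Lemma~\ref{lem:poly}), and multiply through the strata. The gap is in the key quantitative step: you assert that there are ``at most $O(n)$ distinct $(\proc',i,j)$ keys per procedure,'' justified only by the fact that the sorted-set size strictly decreases. That undercounts. The number of controlled statements that \optimize{} ever processes is governed by the number of distinct dictionary keys, and the classical integer parameter $i$ contributes its own factor of $O(n)$ \emph{independently} of the set size $j$: two branches of a quantum case may call the same procedure on sorted sets of equal size but with different integer parameters, and such calls are not merged. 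The paper's argument is that $i$ is only ever updated by adding or subtracting a constant, and this happens at most $O(n)$ times because each recursive call strips at least one qubit; hence $i$ ranges over $O(n)$ values, $j$ over $O(n)$ values, and the key space has size $O(\size{\PR}\cdot n^2)$. This $O(n^2)$ bound on re-entries into \comprec{} per \optimize{} call is precisely what produces the factor $2$ in the exponent $2\size{\PR}$ (one power of $n^2$ per rank stratum), with the trailing $+1$ coming from the $O(n)$ gates of each merge/controlled-$SWAP$ construction. Your recurrence $T_d \le O(n)\cdot T_{d-1}$ (up to polynomial local cost) would yield roughly $n^{\size{\PR}+O(1)}$ and, more importantly, rests on a count you have not established.

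A secondary point: you invoke Lemma~\ref{lem:poly} to ``certify that the aggregate loop count is polynomially bounded,'' but that lemma bounds the \emph{level}, i.e., the number of non-superposed semantic procedure calls, which is not the quantity \optimize{} iterates over --- the whole point of the merging is that the number of controlled statements in $l_{\CST}$ is \emph{not} the number of calls in the derivation tree (which can be exponential across quantum-case branches) but the number of distinct dictionary keys. The paper borrows only the rank definition from that proof, not its statement. Your appeal to Lemma~\ref{lem:ortho} is apt but plays a different role: orthogonality guarantees correctness of the merging, while the polynomial size bound comes from the key-counting argument you are missing.
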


\begin{proof}
Using Lemma~\ref{lem:joiningf} and Lemma~\ref{lem:ortho}. 
\qed
\end{proof}

As there is no circuit duplication in the assignments of \compile, we can deduce from Theorem~\ref{thm:compilepoly} that the compiled circuit is of polynomial size.

\begin{corollary}\label{cor:cor}
For any $\PR$ in \pfoq{}, there exists a polynomial $Q\in\N[X]$ such that $\#{\tb{compile}}(\PR,n) \leq Q(n)$. 
\end{corollary}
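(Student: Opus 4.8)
The plan is to read the polynomial size bound off the running-time bound of Theorem~\ref{thm:compilepoly}, exploiting the fact that \compile never copies an already-constructed circuit. First I would record that the size $\#C$ is additive under the circuit operations used by the algorithm: $\#(C_1\circ C_2)$ and $\#(C_1\otimes C_2)$ are bounded by $\#C_1 + \#C_2 + O(n)$, the extra $O(n)$ accounting for the identity wires inserted so that inputs and outputs match. In particular, assembling circuits can only add sizes, never multiply them.

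Next I would go through every assignment to the circuit variables $C$, $C_{\text{L}}$, and $C_{\text{R}}$ in Algorithms~\ref{alg:comprec} and~\ref{alg:optimize}. Each right-hand side is either a freshly built primitive circuit --- the identity $\mathds{1}$, a controlled gate $M(cs,[n])$, a generalized Toffoli $NOT(cs,\cdot)$, or a controlled permutation $SWAP(cs,\cdot,\cdot)$ --- or a composition of circuits returned by earlier (possibly recursive) calls. Each primitive has size $O(card(dom(cs)))=O(n)$ by the constructions of~\cite{NielsenChuang}, and, crucially, in no branch is a previously built circuit reused twice. This \emph{no-duplication} invariant is precisely what the merging step of \optimize preserves: when a recursive call to $\proc'$ whose arguments are already recorded in $\No$ is encountered, the algorithm does not re-enter the procedure body (which would duplicate its circuit); instead it reuses the stored ancilla $a$ and appends only a constant number of $NOT$ and $SWAP$ gates, the correctness of this merge being guaranteed by Lemma~\ref{lem:joiningf} together with the orthogonality invariant of Lemma~\ref{lem:ortho}.

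From the absence of duplication it follows that the fully expanded circuit $\compile(\PR,n)$ is obtained by concatenating, each counted exactly once, the primitive circuits produced across all steps of the algorithm. Hence $\#\compile(\PR,n)$ is at most the sum, over all steps, of the size of the primitive circuit generated at that step. Since writing out a primitive circuit of size $s$ costs at least $s$ units of time, this sum is bounded by the total running time of \compile, which Theorem~\ref{thm:compilepoly} bounds by $O(n^{2\size{\PR}+1})$. Taking $Q(X) \triangleq c\,X^{2\size{\PR}+1}$ for a suitable constant $c$ then yields $\#\compile(\PR,n)\leq Q(n)$, as required; note that bounding $\#C$ also bounds the ancilla count, since ancilla wires are counted in $\#C$.

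The only delicate point is the verification of the no-duplication invariant; everything else is bookkeeping. I expect the main obstacle to be arguing rigorously that the merging branch of \optimize genuinely avoids re-expanding a procedure body --- that is, that reusing the ancilla recorded in $\No$ \emph{replaces}, rather than adds to, a recursive descent --- so that the symbolic circuit the algorithm assembles and its fully expanded form have the same size. Once this is granted, the size bound is an immediate consequence of the time bound.
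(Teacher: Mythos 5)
Your proposal is correct and follows essentially the same route as the paper, which derives the corollary directly from Theorem~\ref{thm:compilepoly} by observing that \compile never duplicates an already-built circuit, so the output size is bounded by the running time $O(n^{2\size{\PR}+1})$. Your elaboration of the no-duplication invariant (in particular that the dictionary lookup in \optimize \emph{replaces} a recursive descent rather than adding to it) is exactly the point the paper's one-line argument relies on.
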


\section{Acknowledgments}

This work is supported by the the Plan France 2030 through the PEPR integrated project
EPiQ ANR-22- PETQ-0007 and the HQI platform ANR-22-PNCQ-0002; and by the European
projects NEASQC and HPCQS.

\bibliographystyle{splncs04}

\bibliography{references}
\end{document}